\newcounter{prop}
\newtheorem{proposition}[prop]{Proposition}
\newcounter{cor}
\newtheorem{corollary}[cor]{Corollary}
\newcounter{lem}
\newtheorem{lemma}[lem]{Lemma}
\newcounter{rem}
\newtheorem{remark}[rem]{Remark}
\newtheorem{assumption}{Assumption}
\newtheorem{algo}{Algorithm}
\newcommand{\EE}{\mathbb{E}}
\newcommand{\PP}{\mathbb{P}}
\newcommand{\x}{{\boldsymbol x}}
\newcommand{\z}{\boldsymbol z}
\newcommand{\Z}{\boldsymbol Z}
\author{Daphn\'e Aurouet\footnote{Corresponding author: Univ. Rennes, Ensai, CREST--UMR 9194, F-35000 France;\\ \indent \quad daphne.aurouet@ensai.fr} \qquad Valentin Patilea\footnote{Univ. Rennes,  Ensai, CREST--UMR 9194,  F-35000 France; valentin.patilea@ensai.fr}}
\date{\today}
\title{Discrete-time Markov chains with random observation times}
\begin{document}
	\maketitle
	
\begin{abstract}
We propose a new approach for estimating the finite dimensional transition matrix of a Markov chain using a large number of independent sample paths observed at random times.  The sample paths may be observed as few as two times, and the transitions are allowed to depend on covariates. Simple and  easy to update kernel estimates are proposed, and their uniform convergence rates are derived. Simulation experiments show that our estimation approach performs well. 
\end{abstract}

\medskip

\textbf{Keywords:}  Kernel estimation; Roots of stochastic matrices; Transition matrix.

\textbf{MSC 2010:} 62M5, 62G05

%\localtableofcontents

%-------------------------------
\newpage \section{Introduction}
%-------------------------------
% !TeX root = ../MC.tex

Discrete-time Markov chains are long-standing, easy-to-interpret probabilistic tools for analyzing sequence data. They are characterized by their transition matrix, which encapsulates the likelihood of each possible movement. They have found widespread application in fields such as biology, climate, ecology, engineering, finance, medicine, physics, and social studies. See, among many other references,  \cite{GN1962},  \cite{JLT2015}, \cite{AKK2021} and  textbooks  \cite{N1998}, \cite{BLM2005},  \cite{MM2015}, \cite{GS2020}. Markov chain models are also highly valued in mobility studies because the sequence records positions in a finite set of states at specific discrete time intervals.

Our study can be integrated into the broader application of discrete-time Markov chains to model the movement of statistical units, such as humans, animals, or physical goods such as banknotes. In human (e.g., \cite{WZLJ2021}) and animal mobility (e.g., \cite{MM2015}) studies, which have  benefited from the development of  \emph{Internet of Things} technologies, Markov chain modeling represents a natural and popular approach. Here, our focus is on cases where very large numbers of statistical units are available, as in banknote mobility. The subject was studied, for example, by \cite{M2021} who estimates the transitions of ten z\l{}oty banknotes between Polish regions and uses a log-linear model to incorporate economic and geographical factors to model the transition probabilities. \cite{BHG2006} propose to use banknote flows in the USA to study the spread of diseases. For the banknote They consider a continuous time random-walk process that incorporates scale-free jumps as well as long waiting times between displacements. Alternatively, we propose a discrete-time Markov chain approach to model the circulation of statistical entities, such as banknotes, because it produces interpretable results and allows covariates (also called predictors) to be introduced easily. 

We are also interested in a modeling approach that takes into account another feature of banknote circulation, that is  the difficulty in observing the statistical units frequently and/or regularly. The trajectory of a banknote during its lifetime is sparsely observed at random times. Partial observation scenarios occur in various contexts, such as systems that are prohibitively costly or infeasible to open for direct observation at each step. However, random observation patterns in Markov chain paths have not yet been extensively explored. \cite{MSW2008} consider Markov chains  with arbitrary state space observed at certain periodic or random time points. They address the question of which observation patterns allow us to identify the transition distribution. When identification is granted, they construct nonparametric estimators of functionals that depend on two successive observations of the chain. In \cite{BdCER2014}, the authors consider a homogeneous discrete-time Markov chain with finite state space and an incompletely observed trajectory. The purpose is to recover its transition matrix $P$. Assuming the time gaps between the observations are i.i.d., the observed process remains a homogeneous Markov chain with transition matrix $Q$. Imposing a sparsity assumption on $P$ and knowing the location of its zero entries, they build an estimator of the relationship between $P$ and $Q$ and derive an estimator of $P$ from the empirical estimator of $Q$.

Our primary goal is to estimate the transition matrices (also called \emph{stochastic matrices}) from partially and randomly observed trajectories of  Markov chains with a finite number of states. The trajectory of each statistical unit is supposed to be generated by a homogeneous Markov chain, and its transition matrix $\boldsymbol P(\z)$  is allowed to depend on the value $\z$ of the vector $\Z$ of covariates that are assumed to be constant over time. First, for any positive integer $\ell$, we consider the conditional transition probabilities given the covariates and that the time gap between two consecutive observations of a statistical unit is equal to $\ell$ time units. For example, the time unit can be a day or week. Let $\boldsymbol A_\ell (\z)$ be the matrix of these conditional transitions with observation gap equal to $\ell$. Under suitable and realistic assumptions on the time gaps between two consecutive observations of the same statistical unit, $\boldsymbol A_\ell (\z)$ should coincide with the conditional transition matrix power $\boldsymbol P^\ell(\z)$. The entries of $\boldsymbol A_\ell (\z)$  can be estimated by simple locally weighted means. The localization with respect to the continuous covariates is realized using a kernel and a suitable bandwidth, while simple indicator functions are used to stratify with respect to the discrete covariates. An estimate of $\boldsymbol P(\z)$ can then be obtained as the $\ell-$root of the estimate of $\boldsymbol A_\ell (\z)$. Estimates of $\boldsymbol A_\ell (\z)$ can be computed for the most frequent values for the interval between two consecutive observations of a statistical unit. After taking the corresponding matrix roots, this leads to several estimates for  $\boldsymbol P(\z)$ which can be averaged to yield a more accurate %aggregated
 estimator of the conditional  transition matrix given $\Z=\z$. The trajectory of a statistical unit can be observed on as few as two occasions. In contrast to previous studies on incompletely observed Markov chain paths, we leverage information from large number of trajectories.%, as a study of banknote circulation can use. 

The simple roadmap for estimating the conditional transitions described above presents the problem of calculating the $\ell$-th root of the estimates of $\boldsymbol A_\ell (\z)$. In fact, several inconvenient cases can occur~: the $\ell$-th root may not exist, it may exist and not be stochastic, or there can be more than one stochastic $\ell$-th root. Recall that a stochastic (or transition) matrix has nonnegative entries, with each row summing to 1. Several studies have addressed the problem of the existence, number and  nature of roots of a stochastic matrix. For example, \cite{HL2011} identify two classes of stochastic matrices that have stochastic $\ell$-th roots for all $\ell$. Moreover, they derive necessary conditions that the spectrum of a stochastic matrix must satisfy in order for the matrix to have a stochastic $\ell$-th root.

The problem of the existence of a stochastic $\ell$-th root of a stochastic matrix is related to the so-called \emph{embedding problem} for Markov chains (see, e.g., \cite{K1962}, \cite{H2008}, \cite{EBD2010} \cite{VB2018}, \cite{CFSRL2023}). A Markov chain with transition matrix $P$ is embeddable (or the transition matrix $P$ is embeddable) if and only if there exists a matrix $G$ with nonnegative off-diagonal entries whose row sums all vanish, such that $P = \exp(G)$. The matrix $G$ is usually called a \emph{generator} matrix, and alternative terminologies are intensity or rate matrix, skeleton, or $Q-$matrix. \cite{K1962} has shown that a Markov chain is embeddable if and only if it is nonsingular and has stochastic matrix roots of arbitrary order. In fact, if a Markov chain is embeddable with generator $G$, then  $\exp(G/\ell)$ is a stochastic $\ell$-th root of its transition matrix, for any $\ell$. Moreover, any positive power of an embeddable matrix is an embeddable matrix. The characterization of the set of embeddable matrices is an active research field. See, for example, \cite{EBD2010},  \cite{VB2018}, \cite{CFSRL2023} for some recent research. In particular, embeddability has certain consequences on the spectrum of a transition matrix, and the set of embeddable Markov matrices with distinct eigenvalues is relatively open and dense in the set of all embeddable Markov matrices. See, e.g., \cite{EBD2010}. Another aspect of the embedding problem is that an embeddable matrix does not necessarily possess a unique generator. This identifibiality problem, related to the  uniqueness of the real logarithm of a matrix,  has been solved in some particular cases. For example, \cite[Theorem 3]{C1972} shows that a diagonally dominant embeddable matrix has only one generator. Additional conditions were provided in \cite{C1973}. Recently, \cite{CFSRL2023} derived bounds for the number of Markov generators when Markov matrices have different eigenvalues.

While significant progress has been made to  understand the problems of the existence and uniqueness of the roots of a stochastic matrix that are also stochastic, statistical analysis faces additional challenges. Even if the modeling paradigm based on Markov chains is realistic and the theoretical transition matrix is embeddable, there is no guarantee that the estimates of the powers of the transition matrix admit roots that are stochastic. See, for example, \cite{BS2005} for a discussion on the case of the classical maximum likelihood estimation of the generator for a discretely observed Markov jump process with finite state. In the context of datasets arriving in streams, as it is the case in the study of banknote mobility, there is no guarantee that an estimate of a power of the transition matrix that admits roots that are stochastic will preserve these properties after being updated. Furthermore, the embeddability property may fail, or fail after updating the estimate . % required by new data streams. 
Therefore, we follow a pragmatic approach through regularization. 

There are several quantities  for which the regularization can be considered in other context, such as the roots of $\boldsymbol A_\ell (\z)$ for making them stochastic, or  the logarithm of  $\boldsymbol A_\ell (\z)$ for making it a generator matrix. 
%For example, following \cite{BdCER2014} one can exploit their sparsity assumption to derive a workable characterization of matrix roots. 
If a $\ell$-th matrix root of the estimate of $\boldsymbol A_\ell (\z)$ exists but it is not stochastic, then one can search an approximation which is a stochastic matrix; see \cite{DM2024} for a recent reference. Alternatively, several regularization ideas for generator matrices have been proposed,  \cite{IRW2001}, \cite{KS2001}, \cite{EBD2010}. Because it looks for an optimal approximation of the estimates by matrices belonging to a space where  the requirements for stochastic or generator matrices are satisfied, regularization generally involves numerically complex minimization of a loss function. For simplicity and because the empirical studies reported in the literature, as well as our own extensive numerical investigations, suggest  that it performs well, we  consider here the generator regularization approach. For this, we  use the simple ideas described in \citep[Section 3]{IRW2001}, see also \cite{KS2001}, based on the remark that the identity $\boldsymbol P (\z)=\exp (\ell^{-1}\log(\boldsymbol A_\ell (\z)))$ should hold true, provided $\boldsymbol A_\ell (\z)= \boldsymbol P^\ell (\z)$. The idea is then to check whether the logarithm of the estimated $\boldsymbol A_\ell (\z)$ exists and satisfies the conditions of a generator matrix, and if this is not the case, to suitably modify the estimate of $\boldsymbol A_\ell (\z)$ or the logarithm of this estimate, respectively.

The remainder of this paper is organized as follows. Section \ref{sec:chap2-model-methodo} presents the Markov chain model and  assumptions regarding the distribution of gaps between consecutive observations of a trajectory.
Our modeling assumptions guarantee that the distribution of the observations, obtained from independent, partially observed Markov chain sample paths, can be characterized by the powers of the conditional transition matrix given the value of the covariates. The data can then be used to estimate the powers of the conditional transition matrix. In Section \ref{sec:chap2-model-estimation}, we introduce the recursive kernel estimators of these powers. After regularization of the generator estimate, taking the roots of the estimated powers, we obtain several estimates of the  conditional transition matrix of the Markov chain given the covariates, which we finally aggregate in a more precise estimator. In Section \ref{sec:chap2-model-theory}, we derive the uniform consistency and convergence rates of the aggregated estimator.  In Section \ref{sec:chap2-emp-analysis}, we present the results on the finite sample performance of our estimator. 
%\ref{sec:chap2-simu}. 
%Next, in Section \ref{sec:chap2-real-data} we apply the proposed methodology to a real dataset of banknote trajectories within some regions of the euro area from 2022 to 2024. 
The proofs and technical justifications are collected in the Appendix, where some facts about the existence of logarithms and the stochasticity of root matrices are also recalled. Details on the design of the simulation experiments are provided in the Supplementary Material.

%--------------------------------------------------------------
\section{Methodology}\label{sec:chap2-model-methodo}
%--------------------------------------------------------------
% !TeX root = ../MC.tex

Our modeling approach is based on discrete-time Markov chains with a finite number of states but is observed at random times. The objective is to estimate the conditional transition matrix of the chain, given the value of the covariate vector, using a large set of independent sample paths of the chain. First, we need to make some mild assumptions about the chain and the distribution of the random gaps between the observation times of a trajectory. Then, the distribution of our observations can be characterized by the powers of the conditional transition matrix given the value of the covariate vector. We use the estimates of these powers, together with the  logarithm  and exponential matrix functions,  to construct an estimate for the conditional transition matrix of the chain. We also use the simple regularization ideas proposed in \cite{IRW2001}, \cite{KS2001}.

\subsection{Markov chains with random observation times}\label{subsec_Ma}

We consider  $X = (X_t)_{t \in\mathbb N}$  a discrete-time Markov chain of order 1, with values in the finite state space $\mathcal S = \{1, 2, \dots, S \}$. The necessary Markov chains theory for the following can be found in textbooks \cite{N1998} and \cite{GS2020}. A vector $\Z \in\mathcal Z \subset \mathbb R^{p}\times  \mathcal Z_d  $ contains the  discrete covariates supported on $\mathcal Z_d$ and the $p$ continuous covariates.  We suppose $\Z$ independent of the time $t$. In the following, $\z \in \mathcal Z $ is any value in the support $\mathcal Z$ of $\Z$. Thus, we are interested in the process $(X_t, \Z)_{t\in \mathbb N}$. We consider the setup where several independent paths of $X$ are generated, and  a random draw of $\Z$ is observed for each path.

In the present work, we impose the following assumptions on $X$ and its conditional distribution given the covariate vector values.

\medskip 

\begin{assumption}\label{assump1}
    For any $\z\in\mathcal Z$,  the discrete time process  $X$ is a homogeneous, irreducible, and aperiodic Markov chain of order 1 with conditional transition probabilities
     \begin{equation}
     p_{ij}(\z)	= \mathbb P(X_{t+1} = j \mid X_t = i, \Z = \z) , \qquad \forall t\in\mathbb N, \forall i,j\in \mathcal S.
     \end{equation} 
     The conditional transition matrix of the process $X$ given $\Z = \z$ is denoted by $\boldsymbol P(\z)$.
\end{assumption}

\medskip 

Our focus is on the situation where the independent paths of $X$ are incompletely observed, with respect to $t$. More precisely, a sample path is observed at random times $0\leq T_0 < T_1 < \ldots < T_k < T_{k+1} < \ldots$.

To model this setup, let  $\tau_k\in\mathbb N^*  $ be a  random variable that indicates the time between two consecutive observation times of a sample path of $X$, that is we define
\begin{equation}
	\tau_k= T_k - T_{k-1}, \quad  k \in\mathbb N^*.
\end{equation}
Naturally, we have that $T_{k}=\tau_{k}+\cdots\tau_1+T_0$.
The observations associated with one sample path of $X$ are summarized by $(Y_k,\tau_k, \Z)_{k \in\mathbb N^*}$, with 
\begin{equation}\label{def_Yk}
Y_k = X_{T_k}\in\mathcal S,\qquad \text{ and } \quad T_{k}\in \mathbb N^*.
\end{equation}
The purpose is to estimate the conditional transition matrix $\boldsymbol P(\z)$ by using independent sample of sequences $(Y_k,\tau_k, \Z)_{k \in\mathbb N^*}$. 
Let us first consider the following assumption. 

%Including the infinity in the support of $\tau$ allows to account for sample paths that are no longer observed after some time. 

\begin{comment}
First, a sample path cannot be observed beyond a threshold  $t_{\max}$, where  $t_{\max}$ is a given non-random integer. 

Second, and most importantly, 
instead of being completely observed in the observation window, that is at every integer $1\leq t\leq t_{\max}$, 

**************\\

a sample path is observed at some random times $1\leq T_0 < T_1 < \ldots < T_k < T_{k+1} < \ldots\leq t_{\max}$. The number $K_{\max}$ of such random observation times is also random.

***********\\

\textcolor{teal}{Hence, the sample paths consist of sub-sequences $Y_k = X_{T_k}$ for $k\geq 1$}.

\medskip

\textcolor{teal}{The random variable $\tau$ admits a probability mass function, $\mu$, which is unknown but assumed to be \textcolor{red}{ bounded away from zero(?), some smoothness (?)}.  
	
 Consequently, we can impose $\tau_k\in \{ 1,2,\ldots, L\}$ for some integer $L\geq 1$. \textcolor{teal}{Moreover we impose} the process \textcolor{teal}{to lack memory under the following assumption.} 
\end{comment}

\medskip

\begin{assumption}\label{assump2}
%The variables $\tau_k$, $k\geq 1$ are independent and identically distributed, and independent of $X$. 
The initial time is  $T_0=0$. Moreover, for any integer $k\geq 1$, $\z \in\mathcal Z$, $i_{k-1},\ldots,i_1\in\mathcal S$ and $\ell^\prime_k, \ell^\prime _{k-1},\ldots,\ell^\prime_1\in\mathbb N^*$, we have 
    \begin{multline}\label{ass1}
        \mathbb P (Y_{k+1} = j , \tau_{k+1} = \ell  \mid  Y_{k} = i, \tau_k= \ell_k^\prime,  Y_{k-1}=i_{k-1}, \tau_{k-1}=\ell^\prime_{k-1},\ldots Y_1=i_1,\tau_1=\ell^\prime_{1},   \Z=\z) \\
        = \mathbb P (Y_{k+1} = j , \tau_{k+1} = \ell \mid Y_{k} = i, \Z=\z), \qquad \forall i, j \in \mathcal S, \forall \ell\in\mathbb N^* . \quad 
    \end{multline}
\end{assumption}

\medskip

\begin{remark}
For simplicity, in Assumption \ref{assump2} we assume that $T_0$ is not random, and  without loss of generality we can set it to zero. In the case where $T_0$ is random, it suffices to consider that the conditional probabilities in Assumption \ref{assump2} are also conditional, given the finite value of $T_0$. This additional technical aspect is omitted in the following as it does not change the procedure and the results below.  
\end{remark}

\medskip

Assumption \ref{assump2} imposes conditions on the observed variables, and is reasonable for the application we consider. It imposes a conditional lack of memory condition on   the process $(Y_k,\tau_k)_{k \in\mathbb N^*}$ given $\Z=\z$. In particular, it implies that given $\Z=\z$, $(Y_k)_{k \in\mathbb N^*}$ is Markovian of order 1. The justification is provided in Section \ref{sec_Markov_Y} of the Appendix.
To complete the Markov property on the observed process, we impose an assumption that allows us to obtain its conditional transition matrix as a power of the transition matrix of $X$, if the homogeneity condition imposed by Assumption \ref{assump1} holds.

\medskip

\begin{assumption}\label{assump3}
	For any $\z\in\mathcal Z$, and $t,t^\prime \in\mathbb N$ with $t<t^\prime$, it holds 
	\begin{equation}\label{ass2-condi}
        \qquad \mathbb P (X_{T_{k+1}} = j \mid X_{T_k} = i,  T_{k+1}=t^\prime , T_k=t, \Z=\z) = 
        \left(\boldsymbol P^{t^\prime -t}(\z)\right)_{ij},\quad \forall k\in \mathbb N, \forall i,j\in\mathcal S,
    \end{equation}
    where $\boldsymbol P (\z) $ is the transition matrix in Assumption \ref{assump1} and, for any $\ell \geq 1$, $\boldsymbol P^\ell (\z) $ denotes the $\ell$-th power of $\boldsymbol P (\z) $.
\end{assumption}

\medskip

Our objective is to estimate  $\boldsymbol{P}(\z)$ using the available data, with the guarantees given by the assumptions. It worth noting that, for $\ell = t^\prime - t$ and any $k\geq 1$, using Assumption \ref{assump3} we have
\begin{equation}\label{power_Y1}
		\mathbb P (Y_{k+1} = j \mid Y_{k} = i,  \tau_{k+1} = \ell, \Z=\z) = \left(\boldsymbol P^{\ell}(\z)\right)_{ij},\qquad \forall i,j\in\mathcal S.
%\begin{multline}
%	\mathbb P (Y_{k+1} = j \mid Y_{k} = i,  \tau_{k+1} = \ell, \Z=\z) 
%	 \\ \hspace{-1.8cm} = \sum_{t\geq 1}\mathbb P (Y_{k+1} = j \mid Y_{k} = i,  \tau_{k+1} = \ell, T_k=t, \Z=\z)\\ \hspace{-1cm}\times \mathbb P (T_k=t \mid Y_{k} = i,  \tau_{k+1} = \ell,  \Z=\z) 
%	 \\ \hspace{1.1cm} =  \sum_{t\geq 1}\mathbb P (X_{T_{k} + \ell} = j \mid X_{T_k} = i,  T_{k+1} = t+ \ell, T_k=t, \Z=\z) \\ \hspace{1cm}\times \mathbb P (T_k=t \mid Y_{k} = i,  \tau_{k+1} = \ell,  \Z=\z) \\
%	 \hspace{1.9cm}= \left(\boldsymbol P^{\ell}(\z)\right)_{ij}\times \sum_{t\geq 1}\mathbb P (T_k=t \mid Y_{k} = i,  \tau_{k+1} = \ell,  \Z=\z)\\
%	 = \left(\boldsymbol P^{\ell}(\z)\right)_{ij}.
%\end{multline}
\end{equation}
Then, for $k\geq 1$, we have
\begin{equation}\label{eq:link-to-est}
	\mathbb P (Y_{k+1} = j, \tau_{k+1} = \ell \mid Y_{k} = i, \Z=\z) 
	= \left(\boldsymbol P^{\ell}(\z)\right)_{ij} \; \mathbb{P}(\tau_{k+1} = \ell \mid Y_{k} = i,  \Z=\z),
 \end{equation}
from which an expression of $\left(\boldsymbol P^{\ell}(\z)\right)_{ij}$ can be derived. 
 %and deduce 
%\begin{equation}\label{MC_P_fund_1}
% \left(\boldsymbol{P}^{\ell}(\z)\right)_{ij}= \frac{\mathbb P (Y_{k+1} = j, \tau_{k+1} = \ell \mid Y_{k} = i, \Z=\z) }{\mathbb P (\tau_{k+1} = \ell \mid Y_{k} = i, \Z=\z)},\qquad \forall i,j\in\mathcal S,\ell\geq1.
%\end{equation}
To be convenient for the estimation of  $\boldsymbol{P}^\ell(\z)$ using the data containing independent sequences $(Y_k,\tau_k, \Z)_{k\geq 1}$, the probabilities of the left- and right-hand sides of identity \eqref{eq:link-to-est} should not depend on $k$. This is imposed in the following conditional stationarity assumption for the observed variables. 

\medskip

\begin{assumption}\label{assump4}
	For any $\z\in\mathcal Z$, the conditional distribution of $\tau_{k+1}$ given $Y_k$ and $\Z=\z$ is the same for all $k\in \mathbb N$.  
\end{assumption}

\medskip

\begin{remark}
	Assumptions \ref{assump2} to \ref{assump4} do not impose conditional independence conditions, given the covariates on the random times $(T_k)_{k \geq 1}$ or between $(T_k)_{k \geq 1}$ and $X$. If $(T_k)_{k \geq 1}$ and $X$ are conditionally independent given the covariates, then $(\tau_k)_{k \geq 1}$ is also conditionally independent of $X$ given the covariates and several of the imposed conditions are automatically satisfied,  as shown below. 
\end{remark} 

\medskip

\begin{lemma}\label{lemma_ind}
Assumption \ref{assump1} holds true. Assume $T_0$ is not random, and $(\tau_k)_{k\geq 1}$ is conditionally independent of $X$  given $\Z$. Furthermore, for any $\z\in\mathcal Z$, the positive, integer-valued variables $\tau_k$, $k\geq 1$, are conditionally independent and have the same conditional distribution given $\Z=\z$. Then,  Assumptions \ref{assump2} to \ref{assump4} also hold true.
\end{lemma}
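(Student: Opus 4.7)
The strategy is to exploit the conditional independence of the observation gaps $(\tau_k)_{k\geq 1}$ from the chain $X$ given $\Z$, which propagates to the random times $(T_k)_{k\geq 0}$ since $T_0$ is deterministic and each $T_k = \tau_1 + \cdots + \tau_k$ is a measurable function of the $\tau$'s. Consequently, conditionally on $\Z=\z$, the entire family $(T_k)_{k\geq 0}$ is independent of $X$. From this, Assumption \ref{assump3} is immediate: given $T_{k+1} = t'$, $T_k = t$, $\Z=\z$, the conditional law of $X_{T_{k+1}}$ given $X_{T_k}=i$ reduces to the conditional law of $X_{t'}$ given $X_t=i$ under $\Z=\z$, because conditioning additionally on the values of the $T$'s does not perturb the law of $X$. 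Applying the Markov property and homogeneity of $X$ from Assumption \ref{assump1} then yields $(\boldsymbol{P}^{t'-t}(\z))_{ij}$.

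For Assumption \ref{assump4}, I would use that $\tau_{k+1}$ is independent of $(\tau_1,\ldots,\tau_k)$ given $\Z=\z$ by the conditional i.i.d.\ hypothesis, and independent of $X$ given $\Z$ by assumption. Since $Y_k = X_{T_k}$ is measurable with respect to the $\sigma$-field generated by $X$ and $(\tau_1,\ldots,\tau_k)$, this gives
\begin{equation}
\mathbb{P}(\tau_{k+1}=\ell \mid Y_k = i, \Z=\z) = \mathbb{P}(\tau_{k+1}=\ell \mid \Z=\z),
\end{equation}
and the identical distribution hypothesis kills the dependence on $k$.

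The most delicate step is Assumption \ref{assump2}. The plan is to factorize the joint conditional probability on the left-hand side of \eqref{ass1} as the product of (i) the conditional probability of $\tau_{k+1}=\ell$ given all the past and $\Z$, which by the argument above equals $\mathbb{P}(\tau_{k+1}=\ell\mid \Z=\z)$, and (ii) the conditional probability of $Y_{k+1}=j$ given all the past, $\tau_{k+1}=\ell$, and $\Z$. For (ii), I would condition further on the $\sigma$-field carrying the values of $\tau_1,\ldots,\tau_{k+1}$, which fixes the observation times to deterministic values $t_1,\ldots,t_{k+1}$. The event $\{Y_j = i_j\}$ then becomes $\{X_{t_j}=i_j\}$; since the $\tau$'s are independent of $X$ given $\Z$, the conditional law of $X$ given $\Z$ and the $\tau$-history coincides with its unconditional law given $\Z$, so the Markov property of $X$ applied at time $t_k$ yields $(\boldsymbol{P}^\ell(\z))_{ij}$, independently of $i_1,\ldots,i_{k-1}$ and $t_1,\ldots,t_{k-1}$. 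Recombining the two factors and recognizing that the same factorization holds when conditioning only on $Y_k=i$ and $\Z=\z$ gives \eqref{ass1}.

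The main obstacle is the bookkeeping in step (ii): I must be careful to justify that replacing the random $T_k$'s by their realized values and then dropping the conditioning on $(\tau_1,\ldots,\tau_k)$ is legitimate. This is done by writing each required conditional probability as an expectation of a regular conditional probability with respect to the larger $\sigma$-field generated by $\Z$ and $(\tau_k)_{k\geq 1}$, invoking the conditional independence of $X$ from the $\tau$'s to identify the inner probability as a function of $(\Z,\tau_1,\ldots,\tau_{k+1})$ alone, and finally averaging out the $\tau$'s.
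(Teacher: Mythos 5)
Your proposal is correct and follows essentially the same route as the paper: use the conditional independence of the gaps from $X$ given $\Z$ to reduce conditioning on the $(Y,\tau)$-history to conditioning on deterministic observation times, then invoke the Markov property and homogeneity of $X$ to produce the rows of $\boldsymbol P^{\ell}(\z)$, and use the conditional i.i.d.\ structure of the $\tau_k$'s for Assumption \ref{assump4}. Your version is more explicit about the factorization and the measure-theoretic bookkeeping, but the underlying argument is the one in the paper.
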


%\medskip

\subsection{Roots of stochastic matrices}

Motivated by \eqref{eq:link-to-est}, for $\ell \geq 1$, let us consider the matrix $\boldsymbol A_{\ell}(\z)$ with the components 
\begin{equation}\label{MC_P_fund}
\left(\boldsymbol{A}_{\ell}(\z)\right)_{ij}= \frac{\mathbb P (Y_{k+1} = j, \tau_{k+1} = \ell \mid Y_{k} = i, \Z=\z) }{\mathbb P (\tau_{k+1} = \ell \mid Y_{k} = i, \Z=\z)} ,\quad i,j\in\mathcal S, \z\in\mathcal Z.
\end{equation}
It worth noting that $\boldsymbol A_{\ell}(\z) $ is a stochastic matrix that depends only on the distribution of the observed variables. If the Assumptions \ref{assump1} to \ref{assump4} hold true, than $\boldsymbol A_{\ell}(\z)$ does not depend on $k$, and is equal to the $\ell-$th power of the conditional transition matrix  $\boldsymbol P(\z)$. A natural idea is then to estimate $\boldsymbol A_{\ell}(\z) $ and next define an estimator of $\boldsymbol P (\z)$ as the $\ell$-th root of $\boldsymbol A_{\ell}(\z) $, provided that such matrix root exists and is a stochastic matrix. 

Informally, by extending the identities satisfied by positive real numbers, we can expect the following relationship~: for any $\z\in\mathcal Z$, 
\begin{equation}\label{eq:Px.a1}
 \boldsymbol P(\z) = \exp\left\{\frac{1}{\ell} \log\left( \boldsymbol A_{\ell}(\z)\right)\right\}.
\end{equation}
If such a relationship holds true, an estimator of the conditional transition matrix can be obtained by plugging into the equation an estimator of $ \boldsymbol A_{\ell}(\z)$.
The definitions of the matrix functions exponential and logarithm, as well as their basic properties of existence and uniqueness, are recalled  in Appendix \ref{app_exp_log}. While the model assumption guarantee \eqref{eq:link-to-est}, from which we get $\boldsymbol{A}_{\ell}(\z)=\boldsymbol{P}^{\ell}(\z)$, the existence of a real logarithm $\log\left( \boldsymbol A_{\ell}(\z)\right)$ is not automatically granted and we therefore impose the following convenient condition. For any sets $A,B$, $d_{\rm H}(A,B)$ denotes the Hausdorff distance between $A$ and $B$. Moreover,  $\mathbb R^{-}=(-\infty,0]$ and for any matrix $\boldsymbol A$, $\sigma[\boldsymbol A]\subset \mathbb C$ denotes its eigenvalue set (spectrum).

\medskip

\begin{assumption}\label{assump5}
A constant $c>0$ exists such that, for any $\z\in\mathcal Z$ and  any eigenvalue $\lambda(\z)$ of $\boldsymbol P(\z)$, it holds  $|\lambda(\z)| \geq c $. Moreover, there exists a set $\mathcal L\subset [2,L]$ of positive integers such that  for any $\ell\in\mathcal L$, $\inf_{\z\in\mathcal Z}d_{\rm H}(\sigma[\boldsymbol{A}_{\ell}(\z)],\mathbb R^{-})\geq c$.
\end{assumption}

\medskip

Assumption \ref{assump5} imposes that the eigenvalues of the matrix $\boldsymbol{A}_{\ell}(\z)$ stay away from the negative real half-line, uniformly with respect to $\z$. As a consequence , there is a unique real principal logarithm of  $\boldsymbol{A}_{\ell}(\z)$, $\forall \z$. Moreover, $\log (\boldsymbol{A}_{\ell}(\z)^{1/\ell})= \ell^{-1} \log (\boldsymbol{A}_{\ell}(\z))$; see \citep[Th. 1.31, 11.2]{H2008}, see also our Appendix \ref{app_exp_log}. Thus, \eqref{eq:Px.a1} can be used as an estimating equation for $\boldsymbol P(\z)$.

%***** pas encore clair s'il fat encore des hypothses !!!! *****

%--------------------------------------------------------------
\section{Estimation}\label{sec:chap2-model-estimation}
%--------------------------------------------------------------
% !TeX root = ../MC.tex

Data are obtained from $N$ independent sample paths of  $(Y_k,\tau_k)_{k\geq 1}$, with $Y_k=X_{T_k}\in\mathcal S$, $\tau_k=T_k-T_{k-1}\in\mathbb N^*$,  and $N$ independent realizations of the covariate vector $\Z\in\mathcal Z$. For many applications, it is reasonable to consider that the independent paths are only observed at random time points in a fixed window, for example $[1,L]$ for some positive integer $L$. The theoretical investigation is also made easier using this approach.

\medskip

\begin{assumption}\label{assump_1to4}
Let $\mathcal Y = (Y_{k},\tau_{k},\Z_{k})_{k\geq 1}$ be a process defined as in Section \ref{subsec_Ma} such that Assumptions \ref{assump1} to \ref{assump4} hold true. Let $\mathcal Y_m$, $m\geq 1$, be independent copies of $\mathcal Y$. The data obtained from the $\mathcal Y_m$'s are composed of the vectors  $(Y_{m,k},\tau_{m,k},\Z_{m})$,  $1\leq k \leq M_{m}\leq L$, $1\leq m \leq N$, where $L\geq 1$ is a given integer. 
\end{assumption}
\medskip

Sample paths must have at least two data points, at $T_0=0$ and  another time smaller than or equal to $L$. The values $M_m$ are thus random but bounded by $L$.

Our estimation approach is based on the identity in \eqref{eq:Px.a1} and a version of the definition \eqref{MC_P_fund}. More precisely, by the definition of the conditional probability, for any $\z$ in the support of $\Z$ we have
$$
\mathbb P (Y_{k+1} = j, \tau_{k+1} = \ell \mid Y_{k} = i, \Z=\z) = \frac{\mathbb P (Y_{k+1} = j, \tau_{k+1} = \ell , Y_{k} = i \mid \Z=\z)f_{\Z}(\z)}{\mathbb P (  Y_{k} = i \mid  \Z=\z)f_{\Z}(\z)},
$$
and
$$
\mathbb P ( \tau_{k+1} = \ell \mid Y_{k} = i, \Z=\z) = \frac{\mathbb P ( \tau_{k+1} = \ell , Y_{k} = i \mid \Z=\z)f_{\Z}(\z)}{\mathbb P (  Y_{k} = i \mid  \Z=\z)f_{\Z}(\z)}.
$$
Here, $f_{\Z}(\z)$ is the density with respect to the product measure of the Lebesgue measure and the counting measure, assumed to exist. Therefore, the elements of the matrix $\widehat {\boldsymbol A}_{\ell}(\z)$ can equivalently be written as 
\begin{equation}\label{new_defAl}
	\left(\boldsymbol{A}_{\ell}(\z)\right)_{ij}= \frac{ \mathbb P (Y_{k+1} = j, \tau_{k+1} = \ell , Y_{k} = i \mid \Z=\z)f_{\Z}(\z) }{ \mathbb P ( \tau_{k+1} = \ell , Y_{k} = i \mid \Z=\z)f_{\Z}(\z) } ,\quad i,j\in\mathcal S, \z\in\Z.
\end{equation}

For each $\left(\boldsymbol{A}_{\ell}(\z)\right)_{ij}$ in \eqref{new_defAl}, we propose to estimate the  numerator and denominator using nonparametric smoothing with kernels. Let $\z= (\z_{c},\z_{d})\in\mathcal Z $ be a point in the support of $\Z=(\Z_c,\Z_d)$, where $\z_{c}=(\z_{1,c},\ldots,\z_{p,c})$ is a point in the support of $\Z_{c}=(\Z_{c,1},\ldots,\Z_{c,p})$ the vector of continuous components of $\Z$, and $\z_d$ is a point in the support $\mathcal Z_d $ of the discrete covariates $\Z_d$. Let $K(\cdot)$ be a nonnegative  univariate kernel with support $[-1,1]$, and $h$ a bandwidth. We use the following notation~: 
\begin{equation}
    \boldsymbol K_{ h} \left( \Z-\z \right) = h^{-p}   K\left( (\boldsymbol Z_{1,c} - \z_{1,c} )/h \right) \times \cdots\times  K\left( (\boldsymbol Z_{c,p} - \z_{c,p}) /h \right)\times \mathds{1} \left\{ \boldsymbol Z_{d} = \z_{d}\right\} ,  
\end{equation}
where $\mathds{1} \left\{\cdot \right\}$ is the indicator function.

Let $h_m$, $m\geq 1$, be a sequence of bandwidths. For any $1\leq \ell \leq L$ and $\z\in\mathcal Z$, the estimator of the element $(i,j)$ of the matrix $\boldsymbol A_\ell(\z)$  is
\begin{equation}\label{eq:Ax-condi}
    \left(\widehat{\boldsymbol A}_{\ell}(\z)\right)_{ij}  =
    \frac{
        \sum_{m=1}^N \sum_{k=1}^{M_{m}} \mathds{1} \left\{ Y_{m,k} = j, \tau_{m,k}=\ell, Y_ {m,k-1} = i \right\} \boldsymbol K_{ h_m} \left( \Z_{m} - \z  \right)
    }{
        \sum_{m=1}^N   \sum_{k=1}^{M_{m}}  \mathds{1} \left\{ \tau_{m,k}=\ell, Y_{m,k-1} = i \right\}  \boldsymbol K_{ h_m} \left( \Z _{m} - \z  \right)
    },\quad i,j\in\mathcal S. 
\end{equation}
(The rule $0/0 = 0$ applies.) The same bandwidth is used for observations from the same sample paths; however $h_m$  can vary from one trajectory to another. Using an updated bandwidth for each new sample path is required in the case where the estimates need to be continuously updated. This case is discussed in the following. 

\medskip

\begin{remark}\label{rec_Al}
Our estimation approach is designed for application to very large datasets, in particular for streaming data. In applications, a few more observations can be added to each sample path, and more importantly, many other sample paths can be observed. The numerator and denominator in \eqref{eq:Ax-condi} can be easily updated in such situations, with low memory resources and computational complexity. For simplicity, we  detail only the case in which the data are updated with observations from new sample paths. For each $\ell$, we can rewrite the estimator in \eqref{eq:Ax-condi} in the form
\begin{equation}\label{eq:Ax-condi_bis}
	\left(\widehat{\boldsymbol A}_{\ell}(\z)\right)_{ij}  = \frac{\widehat U_{T,N}(i,j;\z,\ell)}{\widehat U_{B,N}(i;\z,\ell)}, \qquad i,j\in\mathcal S,\z\in\mathcal Z,
\end{equation}
where
\begin{multline}\label{recur1}
\widehat U_{T,N}(i,j;\z,\ell ) = \frac{N-1}N \widehat U_{T,N-1}(i,j;\z,\ell) \\+   \frac{1}N \sum_{k=1}^{M_N} \mathds{1} \left\{ Y_{N,k} = j, \tau_{N,k}=\ell, Y_ {N,k-1} = i \right\} \boldsymbol K_{h_N} \left( \Z_{N} - \z  \right),
\end{multline}
and
\begin{equation}\label{recur2}
\widehat U_{B,N} (i;\z,\ell)= \frac{N-1}N\widehat U_{B,N-1}(i,j;\z,\ell) + \frac{1}N  \sum_{k=1}^{M_N} \mathds{1}\! \left\{   \tau_{N,k}=\ell, Y_ {N,k-1} = i \right\} \boldsymbol K_{h_N} \left( \Z_{N} - \z  \right).
\end{equation}
The recursions \eqref{recur1} and \eqref{recur2} can be simply initialized with $\widehat U_{T,0}(i,j;\z,\ell) = \widehat U_{B,0}(i,j;\z,\ell) $.
\end{remark}

\medskip

Given the estimate $\widehat {\boldsymbol A}_{\ell}(\z)$, whenever the matrix $\log(\widehat {\boldsymbol A}_{\ell}(\z))$ satisfies the conditions of a generator matrix,
\begin{equation}\label{eq:Px.l}
	\widehat {\boldsymbol P}_{\ell}(\z) = \exp\left\{\frac{1}{\ell} \log\left(\widehat {\boldsymbol A}_{\ell}(\z)\right)\right\},
\end{equation}
is a stochastic matrix. See, e.g., \citep[Th. 2.1.2]{N1998}. The matrix  $\widehat {\boldsymbol P}_\ell (\z)$  represents an estimator of the conditional transition matrix ${\boldsymbol P} (\z)$ given $\Z=\z$ based on the transitions observed after exactly $\ell$ periods of time. Recall that a \emph{generator} or \emph{$Q-$matrix} of finite dimension is a square matrix $Q$ with entries $q_{ij}$ characterized by the following conditions: (i) $q_{ii}\leq 0$ for all $i$; (ii) $q_{ij}\geq 0$ for all $i\neq j$; and (iii) $\sum_{j}q_{ij} = 0$ for all $i$. If $\log(\widehat {\boldsymbol A}_{\ell}(\z))$ exists and is not a generator matrix, we apply simple regularization to make it be such a matrix. Details are provided in Section \ref{sec_implemnt}. See also \cite{IRW2001}, \cite{KS2001}. 
%Practical solutions for the case where the numerical calculation of the logarithm of $\widehat {\boldsymbol A}_{\ell}(\z)$ fails are also discussed in Section \ref{sec_implemnt}.

To better exploit the information carried by the sample, we can consider several $\ell$ values, typically those with the largest frequencies. Let $[\underline L, \overline L]\subset[1,L]$ a range of $\ell$ and 
\begin{equation}\label{eq:Px-aggregated-emp-2}
	\widehat {\boldsymbol P}(\z) = \frac{1}{\sum_{\ell=\underline L}^{\overline L} 
		\widehat {\pi}_\ell(\z)}
	\sum_{\ell=\underline L}^{\overline L}	\widehat {\pi}_\ell(\z)  \widehat {\boldsymbol P}_{\ell}(\z),
\end{equation}
be an aggregated estimator of the conditional transition matrix ${\boldsymbol P} (\z)$ given $\Z=\z$. Here, $\widehat \pi _\ell(\z)$ is an estimator of  $\pi_\ell(\z)=\mathbb P(\tau_k = \ell \mid \Z=\z)f_{\Z}(\z)$. A simple estimator is obtained with 
\begin{equation}\label{nice_pi}
\frac{\widehat {\pi}_\ell(\z) }{\sum_{\ell=\underline L}^{\overline L} 
	\widehat {\pi}_\ell(\z) }=\frac{\sum_{i\in\mathcal S} \widehat U_{B,N} (i;\z,\ell)}{\sum_{i\in\mathcal S,\underline L\leq \ell\leq \overline L}\widehat U_{B,N} (i;\z,\ell)}.
\end{equation}
The range $[\underline L, \overline L]$ is determined by the practitioner based on the available data and the validity of the condition in Assumption \ref{assump5}. 

\subsection{The generator matrix regularization}\label{sub_sec_Reg}

If Assumptions \ref{assump1} to \ref{assump5} hold true, then  $\log({\boldsymbol A}_{\ell}(\z))$ is well-defined and is a generator matrix. With sufficiently many observations, it is reasonable to expect that the real logarithm of $\widehat {\boldsymbol A}_{\ell}(\z)$ exists. This can be easily seen from the integral representation of the logarithm of the invertible matrices $\boldsymbol A$ with no negative eigenvalues~: with $\boldsymbol I$ denoting the identity matrix, 
\begin{equation}\label{topm}
 \log(\boldsymbol A) = (\boldsymbol A-\boldsymbol I )\int_0^1 [t(\boldsymbol A-\boldsymbol I ) +  \boldsymbol I]^{-1} dt.
\end{equation}
See, e.g., \citep[Th. 11.1]{H2008}. Assumption \ref{assump5} imposes conditions on the spectrum of ${\boldsymbol A}_{\ell}(\z)$ making the integrand in \eqref{topm} well defined and uniformly bounded if $\boldsymbol A$ stays close to ${\boldsymbol A}_{\ell}(\z)$. Since the entries of $\widehat {\boldsymbol A}_{\ell}(\z)$ will be shown to uniformly concentrate around that of $ {\boldsymbol A}_{\ell}(\z)$ with high probability tending to 1, one can expect the $\log({\boldsymbol A}_{\ell}(\z))$ to be well defined over $\mathcal Z$, at least with sufficiently large samples. This is also confirmed by our extensive simulations.

However, there is no guarantee that the logarithm of $\widehat {\boldsymbol A}_{\ell}(\z)$ is a generator (or $Q-$)matrix. We therefore consider a regularization approach for making $\log({\boldsymbol A}_{\ell}(\z))$ a generator matrix. See also \cite{IRW2001}, \cite{KS2001}. More precisely, we propose the following procedure, with two options for the second step. 

\medskip

\begin{algo}\label{algo:regularization} Let $\boldsymbol A$ be a square matrix for which a real logarithm $\boldsymbol B=\log({\boldsymbol A})$ exists. Let   
	$\boldsymbol B_{ij}$, with $ 1\leq i,j\leq S $, be the entries of $\boldsymbol B$, where $S$ is the cardinal of $\mathcal S$. 
	\begin{enumerate}
%		\item[(S1)] \emph{[Set the row sums equal to zero]} Set 
%		$$
%		\widetilde {\boldsymbol B}_{ij} = \frac{{\boldsymbol B}_{ij} }{\sum_{j=1}^{S}{\boldsymbol B}_{ij}} \quad\text{if } \quad \sum_{j=1}^{S}{\boldsymbol B}_{ij}\neq 0, \qquad \text{and} \quad \widetilde {\boldsymbol B}_{i,j} =  \boldsymbol B_{ij} \quad \text{otherwise};
%		$$
\item[(S1)] \emph{[Set negative non-diagonal elements to zero]} Redefine
$$
\widetilde {\boldsymbol B}_{ij} \leftarrow \max\{0, \widetilde {\boldsymbol B}_{ij}\},\qquad  \forall j\neq i .
$$
		\item[(S2)] \emph{[Diagonal adjustment]} Redefine
		$$
		\widetilde {\boldsymbol B}_{ii} \leftarrow - \sum_{j=1,j\neq i}^{S}\widetilde {\boldsymbol B}_{ij} .
		$$

				\item[($S2^\prime$)] \emph{[Weighted  adjustment]} Redefine
		$$
		\widetilde {\boldsymbol B}_{ij} \leftarrow \widetilde {\boldsymbol B}_{ij} - \left|\widetilde {\boldsymbol B}_{ij} \right|\frac{\sum_{j^\prime=1}^{S}\widetilde {\boldsymbol B}_{ij^\prime} }{\sum_{j^\prime=1}^{S}\left|\widetilde {\boldsymbol B}_{ij^\prime}\right| } .
		$$
	\end{enumerate}
\end{algo}

\medskip

We will show in the following that, under mild conditions,  the  regularization Algorithm for making the $\log({\boldsymbol A}_{\ell}(\z))$ a generator matrix does not affect negatively the convergence rate of the transition matrix estimator. 

%Moreover, our extensive simulation experiments show the good performance of this simple regularization. 

%%

%--------------------------------------------------------------
\section{Theoretical grounds}\label{sec:chap2-model-theory}
%--------------------------------------------------------------
% !TeX root = ../MC.tex

We now study the consistency and convergence rates for the estimators proposed in Section \ref{sec:chap2-model-estimation}. The consistency of our estimators $\widehat {\boldsymbol P}_{\ell}(\z)$, and the aggregated version $\widehat {\boldsymbol P}(\z)$, can be derived from the analytic properties of the matrix functions exponential and logarithm, and the asymptotic behavior of the estimators $\widehat {\boldsymbol A}_{\ell}(\z)$ of the power matrix ${\boldsymbol A}_{\ell}(\z)$. First, we study the uniform convergence rate of these estimators. Next, using the analytic properties of the matrix logarithm function, we derive the rate of uniform convergence for the estimator of the generator matrix $\log ({\boldsymbol A}_{\ell}(\z))$. In particular, we prove that the regularization proposed in Section \ref{sub_sec_Reg} does not affect the convergence rate. Finally, we derive the uniform convergence rate for the aggregate estimator $\widehat {\boldsymbol P}(\z)$.

\subsection{Uniform consistency of the power matrix estimator} \label{th_gr1}
	
	It is worth noting that the following relationship holds true~:
\begin{equation}\label{all_related}
\widehat U_{B,N} (i;\z,\ell) = \sum_{j\in\mathcal S}\widehat U_{T,N} (i,j;\z,\ell).
\end{equation}
Since $\mathcal S$ is finite, the relationships \eqref{all_related} and  \eqref{nice_pi} indicate that for the convergence of $\widehat {\boldsymbol P}(\z)$ it suffices to focus the attention on the asymptotic behavior of $\widehat U_{T,N} (i,j;\z,\ell)$.

Let us consider a more general recursive version of the estimator $\widehat U_{T,N} (i,j;\z,\ell)$, which may be useful for better handling situations with  unbalanced transition frequencies. For simplicity, where there is no confusion, we omit the arguments $i,j,\ell$, and denote this more general version by  $\widehat U_{N} (\z)$, where
\begin{multline}\label{better_recU}
\!\!	\widehat U_{N} (\z) = 	\widehat U_{N} (\z;i,j,\ell)  =\frac{1}{\Omega_N}\sum_{m=1}^N w_{N,m}\sum_{k=1}^{M_m} \mathds{1} \left\{ Y_{m,k} = j, \tau_{m,k}=\ell, Y_ {m,k-1} = i \right\} \boldsymbol K_{h_m} \left( \Z_{m} - \z  \right)\\
	\quad \text{ where } \; w_{N,m}>0 \; \text{ and }\; \Omega_N = \sum_{m=1}^N w_{N,m}
	\rightarrow \infty.
\end{multline}
Different choices of the weights $w_{N,m} \Omega_N^{-1}$ lead to different types of recursive density estimators~:  $w_{N,m}=1$ leads to the  estimator in \eqref{recur1}, see also the estimators in \cite{WW1969}, \cite{Y1971}, whereas $ h_m \sim m^{-\alpha}$, setting $w_{N,m}=h_m^{p/2}$  leads to a version of the estimator in \cite{WD1979}, while the choice  $w_{N,m}=h_m^{p}$ yields the estimator studied by \cite{D2013}.

\medskip

\begin{assumption}\label{assump1_th}
	\begin{enumerate}[I)]
\item\label{mxia19} The support $\mathcal Z_d$ of the discrete predictor $\Z_d$ is finite, and the support $\mathcal Z_c$ of the continuous predictors $\Z_c$  is a bounded hyperrectangle in $\mathbb R^p$.

\item\label{mxia2}  The covariate vector $\Z$ admits a density  $f_{\Z}(\z)$ with respect to the product measure of the Lebesgue measure and the counting measure. For any $\z_d\in\Z_d$, the density $f_{\Z}(\z)$ admits  partial  derivatives of second order with respect to $\z_c$  that are uniformly continuous on  $\mathcal Z_c $.
		
	\item\label{mxia2b} For any $\z_d\in\mathcal Z_d$,  $ \PP (Y_{k} = j, \tau_{k}=\ell, Y_ {k-1} = i \mid \Z=\z)$ admits partial  derivatives of second order with respect to $\z_c$ that are uniformly continuous on  $\mathcal Z_c $.
		
	\item\label{mxia2c} For any $i\in\mathcal S$ and $1\leq \ell\leq L$,  $\inf_{\z\in\mathcal Z} \PP ( \tau_{k}=\ell, Y_ {k-1} = i \mid \Z=\z)f_{\Z}(\z) >0$.
		
	\item\label{mker1}   $\boldsymbol K(\cdot)$ is a product kernel obtained with a symmetric, Lipschitz continuous density $K(\cdot)$  supported on $[-1,1]$,    $\mu_2(\boldsymbol K)   = \int _{-1}^1 u^2   K (u) du$, $c_{\boldsymbol K}=\int_{-1}^1   K ^2  (u) du$. Moreover,  $ h_m =cm^{-\alpha}$, where $\alpha \in (0,1/p)$, and $c\in[\underline c, \overline c]$. The weights are $\omega_{N,m}: = w_{N,m} \Omega_N^{-1}$ are given by $w_{N,m} =m^{\beta}$, with $0\leq \beta \leq \alpha p<1$.  
	\end{enumerate}
\end{assumption}

\medskip

The conditions in Assumption \ref{assump1_th} are standard. The finite support of the discrete covariates is assumed for convenience. It could be relaxed at the expense of more complicated writings, but this does not affect the main findings below. The bandwidth condition in Assumption \ref{assump1_th}-\ref{mker1} makes $\widehat U_{N} (\z)$ to depend on $c$, and the uniform convergence with respect to $c$ will allow for data-driven bandwidths with a given decrease rate $\alpha$. The choice of  $\omega_{M,n}$  allows the theory to include some more general recursive estimators as discussed above.

First, we study the uniform rate of convergence  of the stochastic part of  $\widehat U_{N} (\z)$.

\medskip

\begin{proposition}\label{stoch1}
	If Assumptions \ref{assump_1to4}, \ref{assump1_th}-\ref{mxia19}, \ref{assump1_th}-\ref{mker1} hold true,   then
	\begin{equation}
		\max_{1\leq \ell\leq L}\max_{i,j\in\mathcal S}	\sup_{c\in[\underline c, \overline c]} \sup_{\z \in\mathcal Z} \left| \widehat U_{N} (\z)  - \EE \left[\widehat U_{N} (\z) \right] \right| =  O_\PP \left( N^{-(1-\alpha p)/2}\sqrt{\log N } \right).
	\end{equation}
\end{proposition}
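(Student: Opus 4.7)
The plan is to decompose the centered estimator into a sum of independent contributions indexed by the trajectory $m$, apply Bernstein's inequality to obtain a pointwise tail bound, and then lift it to a uniform bound over the continuous parameters $(\z_c,c)\in\mathcal Z_c\times[\underline c,\overline c]$ by a covering argument that exploits the Lipschitz structure of the kernel. The supremum over the finite index set indexed by $(i,j,\ell,\z_d)$ is absorbed at the end through an additional union bound whose cost is only a multiplicative constant.

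Fixing $(i,j,\ell,\z_d)$, I would write
\begin{equation*}
\widehat U_N(\z)-\EE[\widehat U_N(\z)] = \frac{1}{\Omega_N}\sum_{m=1}^N w_{N,m}\,\xi_m(\z_c,c),
\end{equation*}
where each $\xi_m(\z_c,c)$ is centered and the $\xi_m$'s are independent in $m$ by Assumption \ref{assump_1to4}. Using $M_m\le L$ almost surely and $\|\boldsymbol K_h\|_\infty \le C h^{-p}$, a uniform bound $|\xi_m(\z_c,c)|\le C_1 m^{\alpha p}$ follows from Assumption \ref{assump1_th}-\ref{mker1}. A standard change of variables together with the smoothness and positivity conditions in Assumption \ref{assump1_th}-\ref{mxia2},\ref{mxia2b} then yields $\EE[\xi_m(\z_c,c)^2]=O(h_m^{-p})=O(m^{\alpha p})$, uniformly in $(\z_c,c)$. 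Since $\Omega_N\asymp N^{\beta+1}$ and $\sum_{m=1}^N m^{2\beta+\alpha p}\asymp N^{2\beta+\alpha p+1}$, the variance of the normalized sum is
\begin{equation*}
V_N := \Omega_N^{-2}\sum_{m=1}^N w_{N,m}^2\,\EE[\xi_m(\z_c,c)^2] = O(N^{\alpha p-1}),
\end{equation*}
and the almost sure bound on each normalized summand is $B_N:=\max_{m\le N} w_{N,m}|\xi_m|/\Omega_N = O(N^{\alpha p-1})$. Bernstein's inequality then gives, for $\eta_N=A N^{-(1-\alpha p)/2}\sqrt{\log N}$ with $A>0$ fixed,
\begin{equation*}
\PP\bigl(|\widehat U_N(\z)-\EE[\widehat U_N(\z)]|>\eta_N\bigr) \le 2\exp\!\left(-\frac{\eta_N^2}{2V_N+(2/3)B_N\eta_N}\right) = O\bigl(N^{-A^2/(2+o(1))}\bigr),
\end{equation*}
since $B_N\eta_N/V_N\to 0$.

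To upgrade this pointwise bound to a uniform one, I would cover $\mathcal Z_c\times[\underline c,\overline c]$ by an $\epsilon_N$-net of cardinality $O(\epsilon_N^{-(p+1)})$, with $\epsilon_N=N^{-D}$ for some $D$ to be chosen. A direct computation using the Lipschitz continuity and bounded support of $K$ gives, uniformly in the argument and in $c\in[\underline c,\overline c]$,
\begin{equation*}
\bigl|\partial_{\z_c}\boldsymbol K_{cm^{-\alpha}}(u-\z_c)\bigr|=O\bigl(m^{\alpha(p+1)}\bigr),\qquad \bigl|\partial_c\boldsymbol K_{cm^{-\alpha}}(u-\z_c)\bigr|=O\bigl(m^{\alpha p}\bigr),
\end{equation*}
so the continuity modulus of $\xi_m$ on any mesh cell is of order $m^{\alpha(p+1)}\epsilon_N$. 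Summing over $m$ produces a modulus of order $\epsilon_N N^{\alpha(p+1)}$ for the normalized sum. Taking $D>(1-\alpha p)/2+\alpha(p+1)$ makes this modulus negligible with respect to $\eta_N$, and picking $A$ large enough so that $A^2/2>D(p+1)$ guarantees that the union bound over the net still tends to zero. A final maximum over the finite index set indexed by $(i,j,\ell,\z_d)$ adds only a constant factor and concludes the argument.

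The main technical obstacle is to handle the two sources of kernel-scale perturbation simultaneously: the spatial direction $\z_c$, which produces the stiffer derivative of order $h_m^{-p-1}$, and the bandwidth direction $c$, which is milder but also present. The worst contribution comes from trajectories with largest indices $m$, where $h_m$ is smallest and the gradient of $\xi_m$ blows up like $m^{\alpha(p+1)}$. Balancing this blow-up against the net size, while keeping the Bernstein regime $B_N\eta_N\ll V_N$ valid for the target rate, is what ultimately pins down the restrictions $\alpha p<1$ and $\beta\le \alpha p$ in Assumption \ref{assump1_th}-\ref{mker1}.
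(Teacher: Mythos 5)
Your proposal is correct and follows essentially the same route as the paper: a Bernstein inequality applied to the independent per-trajectory summands (bounded by $L$ times the kernel sup-norm, with variance of order $N^{\alpha p+2\beta+1}$ before normalization), followed by a polynomial-cardinality net over $(\z_c,c)$ controlled via the Lipschitz continuity of $K$, and a finite union bound over $(i,j,\ell,\z_d)$. The only cosmetic differences are that you track the net mesh $\epsilon_N=N^{-D}$ explicitly where the paper fixes a grid of $N^{\gamma}$ points with increments bounded by $\mathfrak C N^{-1}$, and your closing remark that the argument forces $\beta\le\alpha p$ is a slight overstatement (only $\alpha p<1$ is needed here; the constraint on $\beta$ matters elsewhere).
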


\medskip

\begin{remark}
For simplicity, we consider that the sample paths are independent, which is a reasonable assumption for many applications. The uniform convergence in Proposition \ref{stoch1} can be obtained as soon as a suitable concentration inequality for sums of centered variables is available, particularly under suitable types of weak dependence between the sample paths. For example, a concentration inequality under $\varphi-$mixing dependence is used by \cite{WL2004} to derive strong uniform convergence of recursive estimators similar to $\widehat U_{N} (\z)$. Alternatively, the short-range dependence condition is considered by \cite{LXW2013} and their Rosenthal and Nagaev-type inequalities can be used. 
\end{remark}

\medskip

As a consequence of Proposition \eqref{stoch1} and the identities \eqref{nice_pi}, \eqref{all_related},  we have the following uniform consistency result. 

\medskip

\begin{corollary}\label{stoch1_cor}
	If the conditions of Proposition \ref{stoch1} and Assumption \	\ref{assump1_th}-\ref{mxia2c} hold true, then 
	\begin{equation}
\max_{1\leq \ell \leq L}\max_{i,j\in\mathcal S}	\sup_{c\in[\underline c, \overline c]} \sup_{\z \in\mathcal Z} \left| \left(\widehat{\boldsymbol A}_{\ell}(\z)\right)_{ij} - 
\frac{\EE \left[\widehat U_{T,N}(i,j;\z,\ell)\right] }{\EE \left[\widehat U_{B,N}(i;\z,\ell)\right]} \right|
=  O_\PP \left( N^{-(1-\alpha p)/2}\sqrt{\log N } \right).
	\end{equation}
	Moreover, 
	\begin{equation}
	\max_{1\leq \ell \leq L} 	\sup_{c\in[\underline c, \overline c]} 		\sup_{\z \in\mathcal Z} \left|  \widehat{\pi}_{\ell}(\z)  - \EE \left[\widehat{\pi}_{\ell}(\z)  \right] \right| =  O_\PP \left( N^{-(1-\alpha p)/2}\sqrt{\log N } \right).
	\end{equation}
	with $ \widehat{\pi}_{\ell}(\z)=\sum_{i\in\mathcal S} \widehat U_{B,N} (i;\z,\ell)/f_{\Z}(\z) $.
\end{corollary}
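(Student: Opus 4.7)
The plan is to reduce the first statement to Proposition \ref{stoch1} via the elementary ratio identity
\[
\frac{\widehat T}{\widehat B}-\frac{T^\ast}{B^\ast}
=\frac{\widehat T-T^\ast}{\widehat B}-\frac{T^\ast}{B^\ast}\cdot\frac{\widehat B-B^\ast}{\widehat B},
\]
where $\widehat T=\widehat U_{T,N}(i,j;\z,\ell)$, $\widehat B=\widehat U_{B,N}(i;\z,\ell)$, and $T^\ast,B^\ast$ denote their respective expectations. Proposition \ref{stoch1} controls the two centered fluctuations at the rate $r_N:=N^{-(1-\alpha p)/2}\sqrt{\log N}$, uniformly in $(i,j,\ell,c,\z)$. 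The quotient $T^\ast/B^\ast$ is bounded by one: identity \eqref{all_related} gives $\widehat T\le \widehat B$ almost surely, and taking expectations preserves the inequality. What remains is a uniform lower bound on $\widehat B$.

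For this, I would first establish that $B^\ast(\z)$ is bounded away from zero uniformly in $\z$ and $c\in[\underline c,\overline c]$. Under Assumption \ref{assump_1to4}, Fubini followed by the usual change of variables $u=(\tilde\z_c-\z_c)/h_m$ inside the kernel integral, combined with the second-order smoothness in Assumptions \ref{assump1_th}-\ref{mxia2} and \ref{assump1_th}-\ref{mxia2b}, yields $B^\ast(\z)=\kappa(\z)+O(h_N^2)$ uniformly, where $\kappa(\z)$ is proportional to $\PP(\tau_k=\ell,Y_{k-1}=i\mid\Z=\z)\,f_{\Z}(\z)$ with a strictly positive factor coming from the expected number of observed transitions per trajectory and the normalization $\Omega_N^{-1}\sum_m w_{N,m}=1$. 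Assumption \ref{assump1_th}-\ref{mxia2c} then forces $\inf_{\z}\kappa(\z)\ge c_0>0$, so $\inf_{\z,c} B^\ast(\z)\ge c_0/2$ for $N$ large. Applying Proposition \ref{stoch1} to $\widehat B-B^\ast$ and using $r_N\to 0$ gives $\inf_{\z,c}\widehat B(\z)\ge c_0/4$ with probability tending to one. Plugging the three bounds into the ratio identity and taking a union bound over the finite sets indexing $i$, $j$, and $\ell$ delivers the first claim.

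The second claim is more direct: by \eqref{all_related}, $\sum_{i\in\mathcal S}\widehat U_{B,N}(i;\z,\ell)=\sum_{i,j\in\mathcal S}\widehat U_{T,N}(i,j;\z,\ell)$, so $f_{\Z}(\z)\widehat\pi_\ell(\z)$ is a finite sum of terms controlled at rate $r_N$ by Proposition \ref{stoch1}. Since $\PP(\cdot\mid\Z=\z)\le 1$, Assumption \ref{assump1_th}-\ref{mxia2c} also implies $f_{\Z}(\z)\ge c_0>0$ uniformly on $\mathcal Z$, and dividing by this deterministic quantity preserves the uniform rate. The main obstacle is really the uniform positivity of $B^\ast$, which requires that the kernel-smoothing bias be controlled uniformly in $\z$ and in the bandwidth scaling constant $c$ over the compact interval $[\underline c,\overline c]$; once this is secured, the remainder of the argument is a routine combination of Proposition \ref{stoch1} with the delta-method-style identity for ratios of kernel estimators.
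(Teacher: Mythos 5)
Your overall architecture is the same as the paper's: control the centered fluctuations of numerator and denominator via Proposition \ref{stoch1}, show the denominator's expectation is bounded away from zero, and combine through a ratio decomposition (your identity is algebraically equivalent to the paper's expansion of $\widehat U_{B,N}^{-1}-\EE[\widehat U_{B,N}]^{-1}$). The bound $T^\ast/B^\ast\le 1$ via \eqref{all_related} and the treatment of $\widehat\pi_\ell$ using $f_{\Z}\ge c_0$ are both correct and match the paper in spirit.

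The one step I would not accept as written is your derivation of the uniform lower bound on $B^\ast$. You obtain it from a second-order bias expansion $B^\ast(\z)=\kappa(\z)+O(h_N^2)$, invoking the smoothness conditions \ref{assump1_th}-\ref{mxia2} and \ref{assump1_th}-\ref{mxia2b}. These are not among the hypotheses of the corollary (only the conditions of Proposition \ref{stoch1} plus \ref{assump1_th}-\ref{mxia2c} are assumed), and, more importantly, such an expansion is uniform only on $\mathcal Z_\epsilon$, away from the boundary of the hyperrectangle, whereas the statement takes the supremum over all of $\mathcal Z$ --- this is precisely why the paper defers the bias analysis to Proposition \ref{bias1} and restricts it to $\mathcal Z_\epsilon$. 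The lower bound needs no expansion at all: since the kernel is nonnegative, Assumption \ref{assump1_th}-\ref{mxia2c} gives
\begin{equation}
\EE\left[\widehat U_{B,N}(i;\z,\ell)\right]
=\Omega_N^{-1}\sum_{m=1}^N w_{N,m}\int \PP(\tau_k=\ell,Y_{k-1}=i\mid \Z=\z^\prime)f_{\Z}(\z^\prime)\,\boldsymbol K_{h_m}(\z^\prime-\z)\,d\mu(\z^\prime)
\geq C\,\Omega_N^{-1}\sum_{m=1}^N w_{N,m}\!\int \boldsymbol K_{h_m}(\z^\prime-\z)\,d\mu(\z^\prime),
\end{equation}
and the remaining kernel integral is bounded below by a positive constant uniformly in $\z\in\mathcal Z$ and $c\in[\underline c,\overline c]$ (at worst a factor $2^{-p}$ at a corner of the support). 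Replacing your expansion by this direct bound makes the argument both hypothesis-compliant and genuinely uniform over $\mathcal Z$; the rest of your proof then goes through unchanged.
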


\medskip

Let $U(\z)$ be a short notation for $ \PP (Y_{k} = j, \tau_{k}=\ell, Y_ {k-1} = i \mid \Z=\z)f_{\Z}(\z)$.
We next study the bias of $\widehat U_{N} (\z)$ as defined in \eqref{better_recU}. For $\epsilon >0$, let  $\mathcal Z_{\epsilon}=\{\z\in\mathcal Z : \|\boldsymbol u - \z_c\|\geq \epsilon, \forall \boldsymbol  u \in \mathbb R^p\}$.

\medskip

\begin{proposition}\label{bias1}
	If   Assumptions \ref{assump_1to4} and \ref{assump1_th} hold true, then, for any $\epsilon>0$, 
	\begin{equation}
\max_{1\leq \ell\leq L}\max_{i,j\in\mathcal S}	\sup_{c\in[\underline c, \overline c]}    \sup_{\z\in  \mathcal Z_{\epsilon}}  \left|  \EE \left[\widehat U_{N} (\z)\right] - U(\z) \right| = O\left( N^{-2\alpha}\right),
	\end{equation}
and 	 $\max_{1\leq \ell\leq L}    \sup_{\z\in  \mathcal Z_{\epsilon}}  \left|  \EE \left[\widehat \pi_{\ell} (\z)\right] - \pi_\ell (\z) \right| = O\left( N^{-2\alpha}\right)$, with $\widehat \pi_{\ell} (\z) $ defined in Corollary \ref{stoch1_cor}.
\end{proposition}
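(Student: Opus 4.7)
The plan is to compute $\EE[\widehat U_N(\z)]$ by iterated expectation (conditioning on $\Z_m$), reduce the expectation of each summand to the convolution of a smooth function against the rescaled kernel, Taylor-expand to second order, and average the resulting $O(h_m^2)$ biases against the weights $w_{N,m}$ to extract the uniform rate $N^{-2\alpha}$.

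First, by linearity and the i.i.d.\ structure of the sample paths,
$$
\EE\bigl[\widehat U_N(\z)\bigr]=\frac{1}{\Omega_N}\sum_{m=1}^N w_{N,m}\,\EE\!\left[\sum_{k=1}^{M_m}\mathds{1}\{Y_{m,k}=j,\tau_{m,k}=\ell,Y_{m,k-1}=i\}\boldsymbol K_{h_m}(\Z_m-\z)\right].
$$
Conditioning on $\Z_m=\z'$ and using the conditional stationarity of Assumption~\ref{assump4} together with the definition of $U$, the $m$-th expectation can be written as $\int U(\z')\boldsymbol K_{h_m}(\z'-\z)\,d\z'$. The product form of $\boldsymbol K$ collapses the discrete coordinate to $\z'_d=\z_d$, and the substitution $\ub=(\z'_c-\z_c)/h_m$ turns the continuous part into $\int_{[-1,1]^p}U(\z_c+h_m\ub,\z_d)\prod_{l=1}^p K(u_l)\,d\ub$.

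Next, since $\z\in\mathcal Z_\epsilon$ and $h_m\le\overline c\,m^{-\alpha}<\epsilon$ for $m$ large enough, the shifted argument stays inside $\mathcal Z_c$, so a second-order Taylor expansion of $U(\cdot,\z_d)$ at $\z_c$ is legitimate. The zeroth-order term returns $U(\z)$; the first-order term vanishes by the symmetry $\int u\,K(u)\,du=0$; and the quadratic term is bounded by $\tfrac12 h_m^2\mu_2(\boldsymbol K)$ times the sup-norm of the second partial derivatives of $U(\cdot,\z_d)$, which is finite by Assumption~\ref{assump1_th}-\ref{mxia2}--\ref{mxia2b}. Uniform continuity of those derivatives makes the Taylor remainder uniform in $\z\in\mathcal Z_\epsilon$ and $c\in[\underline c,\overline c]$, giving $|\EE[\text{$m$-th summand}]-U(\z)|\le C\,\overline c^{\,2}m^{-2\alpha}$ with a constant $C$ independent of $m$, $\z$, $c$. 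Averaging against the weights $w_{N,m}=m^\beta$, with $\Omega_N\sim N^{\beta+1}/(\beta+1)$ and a standard integral-comparison estimate for $\sum_{m=1}^N m^{\beta-2\alpha}$ valid in the parameter regimes allowed by Assumption~\ref{assump1_th}-\ref{mker1}, yields $\Omega_N^{-1}\sum_{m=1}^N w_{N,m}m^{-2\alpha}=O(N^{-2\alpha})$, which proves the first claim. The statement for $\widehat \pi_\ell(\z)$ is then immediate because $\widehat \pi_\ell(\z)f_\Z(\z)=\sum_{i,j\in\mathcal S}\widehat U_{T,N}(i,j;\z,\ell)$ is a finite sum of estimators of the type just handled and $f_\Z(\z)$ is bounded below on $\mathcal Z_\epsilon$ by Assumption~\ref{assump1_th}-\ref{mxia2c}.

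The main obstacle will be making the Taylor remainder genuinely uniform in $c$ and $\z$ simultaneously---handled through the uniform continuity of the second-order partial derivatives in Assumptions~\ref{assump1_th}-\ref{mxia2}--\ref{mxia2b}---and verifying that the $N^{-2\alpha}$ upper bound on the weighted average of the $O(m^{-2\alpha})$ biases persists in the boundary regime $\beta-2\alpha\approx -1$ of Assumption~\ref{assump1_th}-\ref{mker1}. Everything else is classical kernel-bias analysis.
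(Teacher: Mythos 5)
Your argument is correct and is essentially the paper's own proof: condition on $\Z_m$, change variables in the kernel integral, Taylor-expand $U(\cdot,\z_d)$ to second order with the symmetric kernel annihilating the first-order term, and then compare the weighted sum $\Omega_N^{-1}\sum_m w_{N,m}h_m^2$ to $N^{-2\alpha}$ via an integral comparison (the paper's Lemma~\ref{serie_Riemann}). The boundary case $\beta-2\alpha=-1$ you flag is a real (if marginal) caveat that the paper also glosses over, but it does not arise for the optimal choice $\alpha=1/(p+4)$, so your proof matches the intended argument.
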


\medskip

\begin{remark} 
	For a simpler presentation, Proposition \ref{bias1} and the results below exclude a small subset in the support of the continuous predictors to avoid bias increasing near the boundary of the support. The boundary problem can be avoided by considering  univariate boundary corrections and linearly combining estimates with several different bandwidths. Another possibility is to consider boundary modified kernels. See \cite{MS1999} and the references therein.   
 \end{remark}

\medskip

 Gathering the facts from above, we obtain the following result. The proof is now obvious and therefore omitted. 

\medskip

\begin{corollary}\label{cor_lambda1}
If the conditions of Proposition \ref{bias1} hold true, then for any $\epsilon>0$  and any matrix norm $\|\cdot\|$,
	$$
\max_{1\leq \ell \leq L} 	\sup_{c\in[\underline c, \overline c]} \sup_{\z \in\mathcal Z_\epsilon} \left\|  \widehat{\boldsymbol A}_{\ell}(\z)  - {\boldsymbol A}_{\ell}(\z)
\right\| =O_\PP \left(N^{-2\alpha}+ N^{-(1-\alpha p)/2}\sqrt{\log N }\right) .
	$$	
	The optimal exponent for the bandwidth is $\alpha = 1/(p+4)$, which gives the rate of uniform convergence $O_\PP\left(N^{-2/(p+4)}\sqrt{\log N} \right)$ for the transition matrices ${\boldsymbol A}_{\ell}$. The same rate of uniform convergence holds true for %the proportion estimators 
	$\widehat \pi_\ell$ defined in Corollary \ref{stoch1_cor}.
\end{corollary}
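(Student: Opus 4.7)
The plan is to pass from Proposition \ref{bias1} and Corollary \ref{stoch1_cor} to the matrix-norm statement via a standard ratio expansion, entry by entry. Since $\mathcal S$ is finite, all matrix norms on $\mathbb R^{S\times S}$ are equivalent, so it suffices to prove the stated rate for the maximum of the absolute entrywise errors $|(\widehat{\boldsymbol A}_\ell(\z))_{ij}-(\boldsymbol A_\ell(\z))_{ij}|$ over $i,j,\ell,c,\z$. Using expression \eqref{new_defAl} write $(\boldsymbol A_\ell(\z))_{ij}=U_T(\z)/U_B(\z)$ with $U_T(\z)=\PP(Y_k=j,\tau_k=\ell,Y_{k-1}=i\mid\Z=\z)f_{\Z}(\z)$ and $U_B(\z)=\PP(\tau_k=\ell,Y_{k-1}=i\mid\Z=\z)f_{\Z}(\z)$, and use \eqref{eq:Ax-condi_bis} to write $(\widehat{\boldsymbol A}_\ell(\z))_{ij}=\widehat U_{T,N}(\z)/\widehat U_{B,N}(\z)$. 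Then apply the elementary identity
$$
\frac{\widehat U_{T,N}}{\widehat U_{B,N}}-\frac{U_T}{U_B}=\frac{\widehat U_{T,N}-U_T}{U_B}-\frac{U_T}{U_B}\cdot\frac{\widehat U_{B,N}-U_B}{U_B}+R_N(\z),
$$
where $R_N(\z)$ is a second-order remainder in the deviations $\widehat U_{T,N}-U_T$ and $\widehat U_{B,N}-U_B$, provided the denominator is kept away from zero.

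The denominator control is the only step that needs a word. By Assumption \ref{assump1_th}-\ref{mxia2c} there exists $\eta>0$ with $\inf_{\z\in\mathcal Z, i, \ell}U_B(\z)\geq \eta$, and by combining Corollary \ref{stoch1_cor} with Proposition \ref{bias1} applied to $\widehat U_{B,N}$ (via the identity \eqref{all_related} summed over $j\in\mathcal S$), we get the uniform bound
$$
\max_{\ell,i}\sup_{c,\z\in\mathcal Z_\epsilon}\bigl|\widehat U_{B,N}(i;\z,\ell)-U_B(\z)\bigr|=O_\PP\!\left(N^{-2\alpha}+N^{-(1-\alpha p)/2}\sqrt{\log N}\right).
$$
Since this bound is $o_\PP(1)$, with probability tending to one $\widehat U_{B,N}\geq \eta/2$ uniformly, so the denominator $\widehat U_{B,N}$ is negligibly different from $U_B$ and the quadratic remainder $R_N(\z)$ is of strictly smaller order than the leading term. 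The same argument applied to the numerator (again by Corollary \ref{stoch1_cor} combined with Proposition \ref{bias1}) gives
$$
\max_{\ell,i,j}\sup_{c,\z\in\mathcal Z_\epsilon}\bigl|\widehat U_{T,N}(i,j;\z,\ell)-U_T(\z)\bigr|=O_\PP\!\left(N^{-2\alpha}+N^{-(1-\alpha p)/2}\sqrt{\log N}\right).
$$
Plugging these two uniform bounds into the linearization above, and using that $U_T/U_B\leq 1$, yields the entrywise rate $O_\PP(N^{-2\alpha}+N^{-(1-\alpha p)/2}\sqrt{\log N})$, hence the matrix-norm statement.

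For the second claim about $\widehat\pi_\ell$, the quantity $\widehat\pi_\ell(\z)=\sum_{i\in\mathcal S}\widehat U_{B,N}(i;\z,\ell)/f_{\Z}(\z)$ is handled identically, using the same bias/variance decomposition applied to a finite sum of $\widehat U_{B,N}$ terms (or, equivalently, directly from Corollary \ref{stoch1_cor} and Proposition \ref{bias1}). Finally, balancing the two contributions in $\alpha$ amounts to solving $2\alpha=(1-\alpha p)/2$, which gives $\alpha=1/(p+4)$ and the resulting rate $O_\PP(N^{-2/(p+4)}\sqrt{\log N})$. There is no real obstacle: the only delicate point is the uniform positivity of the denominator, which is supplied by Assumption \ref{assump1_th}-\ref{mxia2c} together with the already proven uniform convergence of $\widehat U_{B,N}$.
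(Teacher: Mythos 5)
Your proposal is correct and follows essentially the route the paper intends (the paper omits this proof as "obvious", since it is exactly the combination of the stochastic rate from Proposition \ref{stoch1}/Corollary \ref{stoch1_cor}, the bias rate from Proposition \ref{bias1}, and the denominator lower bound from Assumption \ref{assump1_th}-\ref{mxia2c}). Your ratio linearization around the true $U_T/U_B$, the uniform positivity argument for $\widehat U_{B,N}$, and the balancing $2\alpha=(1-\alpha p)/2$ giving $\alpha=1/(p+4)$ are all sound and match the intended argument.
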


\medskip

\begin{remark}
	Concerning the pointwise quadratic risk  of the kernel estimator $\widehat U_N ( \z)$, by the proof of Proposition \ref{bias1}, 
	if $h_m= cm^{-\alpha}$ with  $\alpha = 1/(p+4)$, we get
	\begin{equation}\label{bias_lam1}
		\EE \big[\; \widehat U_N ( \z)\big] -  G(\z)  = c^2\frac{N^{-2\alpha}(1\!+\!\beta)}{2(1+\beta -2\alpha) }\mu_2(\boldsymbol  K) \operatorname{Trace}(\mathcal H_G(\z))
		\{1+o(1)\},
	\end{equation}
	where $\mathcal H_G(\z)$ is the Hessian of $G(\z)= \PP (Y_ {k} = j,  \tau_{k}=\ell, Y_ {k-1} = i \mid \Z=\z) f_{\Z}(\z )$ with respect to the continuous components of $\Z$.  On the other hand, from the proof of Proposition \ref{stoch1}, the pointwise variance of $\widehat U_N ( \z)$ is   
	\begin{equation}\label{var_lam1}
		\operatorname{Var}\big [ \widehat U_N ( \z) \big] \leq  \frac{L \|G\|_\infty c^p_{\boldsymbol K}}{N^{1 -\alpha p}} 
		\frac{ G(\z)} {c^p}  \frac{(1+\beta)^2}{1+\alpha p+2\beta} \{1+o(1)\},
	\end{equation}
	where $\|\cdot\|_\infty$ denotes the uniform norm. 	The optimal constant $c$ for the bandwidth $\mathfrak h(i)$ is  obtained by minimizing the leading terms in the squared bias plus the variance, \emph{i.e.}, with $p=d_c+1$ and $ \alpha= 1/(p+4)$,
	\begin{equation}\label{opt_ct_RCH}
	c_{\rm opt}(\beta;\alpha,\z)=\left[\frac{p(1+\beta-2\alpha)^2L c^p_{\boldsymbol K}G(\z)}{(1+\alpha p +2\beta)\{\mu_2(\boldsymbol K)\operatorname{Trace}(\mathcal H_G(\z))\}^2}\right]^\alpha.
	\end{equation}
	Thus, the optimal bandwidth for the estimator $\widehat U_N ( \z)$ is equal to the optimal bandwidth for the  non-recursive kernel estimator of $G(\z)$ multiplied by  
	\begin{equation}\label{Cr_const}
	c_r(\beta)  =   \left[\frac{\beta p+p+2 }{2(p+4)}\right]^{1/(p+4)}<1.
	\end{equation}
	See \cite{KX2019} for the case $\beta=0$. A constant $c_r(\beta)<1$  is not surprising as a larger bandwidth in the early recursions results in a large bias, which cannot be compensated for by the reduction in the variance.
	
	% We deduce from above  that,  
	Hence, the bandwidth rule for $\widehat U_N(\z)$ can be derived from common rules.
\end{remark}

% for standard density kernel estimates.

\subsection{Uniform consistency of the transition matrix estimator} \label{th_gr2}

In the following results, $ \widehat{\boldsymbol A}_{\ell}(\z) $ is defined as in \eqref{eq:Ax-condi} with $h_m=cm^{-\alpha}$ and let  $ \widehat{\boldsymbol B}_{\ell}(\z) $ be the regularized version of $\log( \widehat{\boldsymbol A}_{\ell}(\z)) $, whenever such a real logarithm of $ \widehat{\boldsymbol A}_{\ell}(\z) $ exists. 

\medskip

\begin{proposition}\label{reg_rate}
	The Assumptions \ref{assump5} to \ref{assump1_th} hold true, and $\alpha = 1/(p+4)$. Then, for any $\epsilon>0$,  and any matrix norm $\|\cdot\|$,
	$$
	\max_{1\leq \ell \leq L}	\sup_{c\in[\underline c, \overline c]} \sup_{\z \in\mathcal Z_\epsilon} \left\|  (1/\ell)\widehat{\boldsymbol B}_{\ell}(\z)  - \log({\boldsymbol P}(\z))
	\right\| =O_\PP\left(N^{-2/(p+4)}\sqrt{\log N} \right).
	$$	
\end{proposition}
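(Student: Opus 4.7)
The plan is to combine the uniform rate $\eta_N = N^{-2/(p+4)}\sqrt{\log N}$ for $\widehat{\boldsymbol A}_\ell(\z)$ from Corollary \ref{cor_lambda1} with analytic smoothness of the matrix logarithm and a direct entrywise analysis of the regularization in Algorithm \ref{algo:regularization}. Throughout I write $\boldsymbol G_\ell(\z) = \log(\boldsymbol A_\ell(\z)) = \ell\,\log(\boldsymbol P(\z))$; the second equality follows from Assumption \ref{assump5} and the discussion following it, and in particular $\boldsymbol G_\ell(\z)$ is a generator matrix (non-negative off-diagonal entries, vanishing row sums).

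First I would argue that, with probability tending to one, the real principal logarithm of $\widehat{\boldsymbol A}_\ell(\z)$ exists uniformly in $\z \in \mathcal Z_\epsilon$, $\ell \in \mathcal L$, and $c \in [\underline c, \overline c]$. Assumption \ref{assump5} yields a uniform spectral gap $\inf_\z d_{\rm H}(\sigma[\boldsymbol A_\ell(\z)], \mathbb R^-) \geq c_0 > 0$; continuity of the spectrum in the matrix entries, together with $\sup_{\z\in\mathcal Z_\epsilon}\|\widehat{\boldsymbol A}_\ell(\z) - \boldsymbol A_\ell(\z)\| = o_\PP(1)$ from Corollary \ref{cor_lambda1}, ensures that on a high-probability event $\mathcal E_N$ the spectrum of $\widehat{\boldsymbol A}_\ell(\z)$ also stays at distance at least $c_0/2$ from $\mathbb R^-$, uniformly. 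On $\mathcal E_N$, the integrand $[t(\boldsymbol A - \boldsymbol I) + \boldsymbol I]^{-1}$ in the integral representation \eqref{topm} is uniformly bounded in operator norm along the entire segment joining $\boldsymbol A_\ell(\z)$ and $\widehat{\boldsymbol A}_\ell(\z)$. Applying \eqref{topm} to both matrices, combining the difference of the prefactors $(\boldsymbol A - \boldsymbol I)$ with the resolvent identity $M^{-1} - N^{-1} = M^{-1}(N - M)N^{-1}$ applied to the integrands, and using the finite-dimensional equivalence of norms, gives
\begin{equation}
\sup_{\z \in \mathcal Z_\epsilon}\bigl\|\log(\widehat{\boldsymbol A}_\ell(\z)) - \boldsymbol G_\ell(\z)\bigr\| \;\leq\; C \sup_{\z \in \mathcal Z_\epsilon}\bigl\|\widehat{\boldsymbol A}_\ell(\z) - \boldsymbol A_\ell(\z)\bigr\| \;=\; O_\PP(\eta_N),
\end{equation}
uniformly in $c$ and $\ell$, with $C$ depending only on $c_0$.

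Second, I would verify that Algorithm \ref{algo:regularization} preserves the rate. Set $\boldsymbol B^\circ_\ell(\z) = \log(\widehat{\boldsymbol A}_\ell(\z))$, so that every entry of $\boldsymbol B^\circ_\ell(\z) - \boldsymbol G_\ell(\z)$ is $O_\PP(\eta_N)$ uniformly. Because $(\boldsymbol G_\ell(\z))_{ij} \geq 0$ for $i \neq j$, the truncation $x \mapsto \max\{0,x\}$ in step (S1) is non-expansive against any non-negative reference, so the post-(S1) matrix $\widetilde{\boldsymbol B}$ satisfies $|\widetilde{\boldsymbol B}_{ij} - (\boldsymbol G_\ell)_{ij}| \leq |(\boldsymbol B^\circ_\ell)_{ij} - (\boldsymbol G_\ell)_{ij}|$ for every $i \neq j$. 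For step (S2), the zero-row-sum property $\sum_j(\boldsymbol G_\ell)_{ij} = 0$ gives
\begin{equation}
\widetilde{\boldsymbol B}_{ii} - (\boldsymbol G_\ell(\z))_{ii} \;=\; -\sum_{j \neq i}\bigl[\widetilde{\boldsymbol B}_{ij} - (\boldsymbol G_\ell(\z))_{ij}\bigr],
\end{equation}
a sum of $S-1$ terms each $O_\PP(\eta_N)$, where $S$ is fixed. An analogous elementary calculation, exploiting the non-negativity of the off-diagonal entries of $\boldsymbol G_\ell$ together with its zero row sums, handles the alternative (S2'). Dividing by $\ell$, which is bounded above by $L$ and below by $2$ in $\mathcal L$, then yields the announced rate.

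The main obstacle is the uniform Fréchet-type perturbation bound for the matrix logarithm derived in the second paragraph: it must hold uniformly over the continuum $\z \in \mathcal Z_\epsilon$ and over a data-dependent, shrinking neighborhood of $\boldsymbol A_\ell(\z)$. Assumption \ref{assump5} supplies exactly the uniform spectral gap needed to keep the resolvent in \eqref{topm} uniformly bounded along the relevant segments; once this is in place, the remainder of the argument is a bookkeeping exercise combining \eqref{topm} with the entrywise bounds on the regularization steps.
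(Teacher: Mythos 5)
Your proposal is correct and follows essentially the same route as the paper: uniform convergence of $\widehat{\boldsymbol A}_{\ell}(\z)$ from Corollary \ref{cor_lambda1}, preservation of the spectral gap of Assumption \ref{assump5} on a high-probability event so the real logarithm exists uniformly, a Lipschitz bound for the matrix logarithm via the integral representation \eqref{topm} (the paper invokes the Fr\'echet derivative and an explicit spectral-variation inequality where you invoke continuity of the spectrum, but these play the same role), and an entrywise treatment of the regularization. Your analysis of steps (S1) and (S2) is in fact more explicit than the paper's, which merely remarks that the regularized entries are simple transformations of those of $\log(\widehat{\boldsymbol A}_{\ell}(\z))$.
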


\medskip

We now have all the elements to derive the uniform consistency of the conditional tradition matrix $\boldsymbol P(\z)$. Let $\underline L, \overline L\in[1,L]$ be given integers such that $\underline L < \overline L$.
\begin{equation}\label{final_t_P}
	\widehat {\boldsymbol P}(\z) = \frac{1}{\sum_{\ell=\underline L}^{\overline L} 
		\widehat {\pi}_\ell(\z)}
	\sum_{\ell=\underline L}^{\overline L}	\widehat {\pi}_\ell(\z)  \exp\left((1/\ell)\widehat{\boldsymbol B}_{\ell}(\z)\right),
\end{equation}
with   $\widehat \pi _\ell(\z)$ defined in \eqref{nice_pi}.

\medskip

\begin{corollary}\label{final_touch}
If the conditions of Proposition \ref{reg_rate}  hold true, then for any $\epsilon>0$, and any matrix norm $\|\cdot\|$,
	$$
	\sup_{c\in[\underline c, \overline c]} \sup_{\z \in\mathcal Z_\epsilon} \left\|  \widehat {\boldsymbol P}(\z) - {\boldsymbol P}(\z)
	\right\| =O_\PP\left(N^{-2/(p+4)}\sqrt{\log N} \right).
	$$	
\end{corollary}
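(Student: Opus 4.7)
The plan is to combine Proposition \ref{reg_rate}, which already delivers the uniform rate for the regularized generator estimates, with two further ingredients: Lipschitz continuity of the matrix exponential on a bounded set (to transfer that rate to each individual estimator $\widehat {\boldsymbol P}_\ell (\z) = \exp\{(1/\ell) \widehat{\boldsymbol B}_\ell(\z)\}$), and the convex combination structure of $\widehat{\boldsymbol P}(\z)$ together with the uniform convergence of $\widehat \pi_\ell(\z)$ established in Corollary \ref{stoch1_cor} and Proposition \ref{bias1} (to argue that aggregation preserves the rate). Since all matrix norms are equivalent in finite dimension, it suffices to prove the bound for one submultiplicative norm.

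First, by Assumption \ref{assump5} and the integral representation \eqref{topm}, $\sup_{\z\in\mathcal Z_\epsilon}\|\log(\boldsymbol P(\z))\|$ is finite. Proposition \ref{reg_rate} yields
$$\max_{\underline L\leq \ell\leq \overline L}\sup_{c\in[\underline c,\overline c]}\sup_{\z\in\mathcal Z_\epsilon}\left\|(1/\ell)\widehat{\boldsymbol B}_\ell(\z)-\log(\boldsymbol P(\z))\right\|=O_\PP\!\left(N^{-2/(p+4)}\sqrt{\log N}\right)=o_\PP(1),$$
so on an event $\mathcal E_N$ with $\PP(\mathcal E_N)\to 1$ all the matrices $(1/\ell)\widehat{\boldsymbol B}_\ell(\z)$ lie in a fixed bounded neighborhood $\mathcal B$ of the image of $\log({\boldsymbol P}(\cdot))$. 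The matrix exponential is Lipschitz on $\mathcal B$ with a constant $C$ depending only on $\sup_{M\in\mathcal B}\|M\|$; see e.g.~\cite[Th. 10.11]{H2008}. Combined with \eqref{eq:Px.a1} (valid under Assumption \ref{assump5}, since $\boldsymbol P(\z)=\exp\{(1/\ell)\log(\boldsymbol A_\ell(\z))\}$ for each $\ell\in\mathcal L\supseteq [\underline L,\overline L]$), this gives
$$\max_{\underline L\leq \ell\leq \overline L}\sup_{c\in[\underline c,\overline c]}\sup_{\z\in\mathcal Z_\epsilon}\left\|\widehat{\boldsymbol P}_\ell(\z)-\boldsymbol P(\z)\right\|=O_\PP\!\left(N^{-2/(p+4)}\sqrt{\log N}\right).$$

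Second, since the weights $\widehat w_\ell(\z)=\widehat\pi_\ell(\z)/\sum_{\ell'}\widehat\pi_{\ell'}(\z)$ are nonnegative and sum to one (once the denominator is positive), we can write $\boldsymbol P(\z)=\sum_\ell \widehat w_\ell(\z)\boldsymbol P(\z)$ and obtain the telescoping
$$\widehat{\boldsymbol P}(\z)-\boldsymbol P(\z)=\sum_{\ell=\underline L}^{\overline L}\widehat w_\ell(\z)\bigl[\widehat{\boldsymbol P}_\ell(\z)-\boldsymbol P(\z)\bigr],\qquad \left\|\widehat{\boldsymbol P}(\z)-\boldsymbol P(\z)\right\|\le \max_{\underline L\leq \ell\leq \overline L}\left\|\widehat{\boldsymbol P}_\ell(\z)-\boldsymbol P(\z)\right\|.$$
The convex combination structure is crucial: it absorbs the weights without paying any additional multiplicative constant. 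The denominator $\sum_\ell \widehat \pi_\ell(\z)$ is bounded away from zero on an event of probability tending to one, because Assumption \ref{assump1_th}-\ref{mxia2c} (after summation over $i\in\mathcal S$) gives $\inf_\z\sum_\ell \pi_\ell(\z)>0$, while Corollary \ref{stoch1_cor} combined with Proposition \ref{bias1} yields $\sup_\z|\widehat\pi_\ell(\z)-\pi_\ell(\z)|=o_\PP(1)$ for each $\ell\in[\underline L,\overline L]$. Combining this bound with the first step delivers the stated rate.

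The main obstacle is the first step: the Lipschitz constant for the exponential must be chosen independently of $\z$, $c$, $\ell$, and $N$. This is exactly where the explicit uniform bound on $\log(\boldsymbol P(\z))$ coming from Assumption \ref{assump5} is needed, so that a deterministic compact neighborhood $\mathcal B$ can be fixed \emph{before} invoking the stochastic rate from Proposition \ref{reg_rate}. The remaining steps are then elementary, and the finiteness of the index set $[\underline L,\overline L]\subset[1,L]$ makes the uniformity over $\ell$ automatic.
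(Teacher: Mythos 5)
Your proposal is correct and follows essentially the same route as the paper's (one-line) proof: Lipschitz continuity of the matrix exponential on a fixed bounded set applied to the uniform rate from Proposition \ref{reg_rate}. You additionally spell out the convex-combination step for the aggregated estimator and the positivity of $\sum_\ell \widehat{\pi}_\ell(\z)$, details the paper leaves implicit but which are exactly what is needed.
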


\medskip

\begin{remark}
	It is easy to see that all the consistency results given in this section remain valid with $\widehat U_{N} (\z)$ defined as in \eqref{better_recU}
	% but with 
	when the same bandwidth $h_N\sim N^{-\alpha}$ for all observations, instead of a sequence $h_m\sim m^{-\alpha}$. In other words, our approach can be applied to a batch without recursions, can be used to update estimates when a new batch is available, or can be implemented in a fully recursive manner. Theoretical guarantees can be derived for each of these types of implementations. For simplicity, we only consider the case in which the estimates of $\widehat{\boldsymbol A}_{\ell}(\z)$ are updated after each sample path. 
\end{remark}

%--------------------------------------------------------------
\section{Empirical analysis}\label{sec:chap2-emp-analysis}
%--------------------------------------------------------------

%\subsection{Simulation studies}\label{sec:chap2-simu}
% !TeX root = ../MC.tex

In this section, we  present the results of our simulation experiments. Our goal is to estimate the (conditional) transition matrix $\boldsymbol P$ (resp. $\boldsymbol P(\z)$) associated with a finite Markov chain $(X_t)_{t\geq 1}$ (resp. $(X_t, \Z)_{t\geq 1}$) from the realizations of the process $(Y_k,\tau_k)_{k\geq 1}$ (resp. $(Y_k,\tau_k, \Z)_{k\geq 1}$), as described in Section \ref{subsec_Ma}.

%results on the finite sample performance of our estimator for transition matrix of a discretely observed Markov chain with random returns, with and without covariates. 
%\color{black} The setup for the simulation experiment was designed to match the banknote samples transitioning between the regions of the euro area. 
%In Section \ref{sec:chap2-real-data} we analyze this real dataset. \color{black}

\subsection{Simulation design}\label{sec:chap2-simu-implementation}

%to the extent that the logarithm exists for the empirical matrix. 
The simulations were performed for two state spaces, one with three states and the other with five states. %The goal is to highlight the growing complexity of estimation as the number of parameters increases. %(9 for 3-states, 25 for the 5-states).
The transition matrix of the process $X$ according to the definition of $\mathcal S$ is 

\begin{equation}\label{sim_matrixP1}
% 	\boldsymbol P = \begin{bmatrix}
%         0.940007 & 0.034412 & 0.025581 \\ 
%         0.038810 & 0.925639 & 0.035551 \\ 
%         0.003831 & 0.025038 & 0.971131 \\	
% 	\end{bmatrix}, \quad \text{ and }
	\boldsymbol P = 10^{-2}\begin{bmatrix}
        94.0007 & 3.4412 & 2.5581 \\ 
        3.8810 & 92.5639 & 3.5551 \\ 
        0.3831 & 2.5038 & 97.1131 \\	
	\end{bmatrix},
\end{equation}
and 
\begin{equation}\label{sim_matrixP2}
	\boldsymbol P = 10^{-2}\begin{bmatrix}
    	91.4828 & 1.7832 & 1.5797 & 3.9951 & 1.1592 \\ 
        0.4332 & 94.0624 & 3.5217 & 0.1473 & 1.8354 \\ 
        0.8712 & 1.7389 & 93.1986 & 1.1289 & 3.0624 \\ 
        0.3389 & 3.0794 & 2.7967 & 90.3348 & 3.4502 \\ 
        0.3325 & 3.7597 & 4.3798 & 2.8478 & 88.6802 \\ 
	\end{bmatrix}.	
% 	\boldsymbol P =\begin{bmatrix}
%     	0.914828 & 0.017832 & 0.015797 & 0.039951 & 0.011592 \\ 
%         0.004332 & 0.940624 & 0.035217 & 0.001473 & 0.018354 \\ 
%         0.008712 & 0.017389 & 0.931986 & 0.011289 & 0.030624 \\ 
%         0.003389 & 0.030794 & 0.027967 & 0.903348 & 0.034502 \\ 
%         0.003325 & 0.037597 & 0.043798 & 0.028478 & 0.886802 \\ 
% 	\end{bmatrix}.
\end{equation}
respectively. These matrices, as many others that we considered in simulations for which the results are not reported here, are generated randomly (details are given in the Supplementary Material). They admit a unique generator, as they are inverses of $M-$matrices and their diagonals are larger than 1/2. See Appendix \ref{M_matrix_sto} for details and references on the existence and uniqueness of the generator matrices.  
%alternative:The transition matrices presented above are  inverses of $M-$matrices. See Section \ref{M_matrix_sto} for the definition of $M-$matrices.  The details on the construction of these  matrices $\boldsymbol P$ are provided in the Supplementary Material. 

%The procedure for generating such stochastic matrices is provided in the Supplementary Material. 

To incorporate the effect of covariates into the transition probabilities, we consider the following expression for the elements of the conditional transition matrix given $\Z=\z$,

\begin{equation}\label{eq:sim-link-func}
\left(\boldsymbol P(\z)\right)_{ij} = 
\frac{
	\exp\left\{ p_{ij} \; \psi(\z) \right\}
}{
	\sum_{v=1}^{|\mathcal S|} \exp\left\{p_{iv}\; \psi(\z) \right\}
}, \qquad \forall i,j \in \mathcal{S}, \quad \z \in \mathcal Z,
\end{equation}
where $p_{ij}$ are the elements of matrix $ \boldsymbol P$ and $\psi(\z)$ is a  function  of the covariates. It is worth noting that, in general, the existence of the $\ell$-th roots and their stochasticity cannot be ensured for all $\z \in \mathcal Z$, and some regularizing steps are sometimes required. The covariate vector $\Z$ has two components that are drawn independently. One continuous variable, named $\Z_{c}$, such that $\Z_{c}-1$ has a Beta distribution $\mathcal B(a,b)$ with parameters $a=b=2$. 
%A second continuous variable, $Z_{(c,2)}$, is distributed as an Exponential with parameter 1. 
%The continuous variable is shifted by a constant set to one to imitate the support of the variables considered in the application.
One discrete variable, $\boldsymbol Z_{d}$ is a Bernoulli variable with parameter $0.7$.
%, \color{black} which proxies the denomination in the real data. 
%\color{black} It is worth noting that we consider the random vector $\Z$ to be independent of the position of the individual in the state space and constant in time. 
Finally, we consider $\psi(\Z_c, \Z_d) = 3 \Z_c (1.2 \Z_d + 0.8 (1-\Z_d))$ in our results.

% \begin{cases}

% \varphi Z_{(c,1)} & \text{ if } \Z = (Z_{(c,1)}), \\

% \varphi Z_{(c,1)} (1.2 \times Z_d + 0.8 (1-Z_d)) & \text{ if } \Z = (Z_{(c,1)}, Z_d), \\

% \varphi \left[Z_{(c,1)} + Z_{(c,2)}\right] (1.2 \times Z_d + 0.8 (1-Z_d)) & \text{ if } \Z = (Z_{(c,1)}, Z_{(c,2)}, Z_d),

% \end{cases}

To generate $N$ independent trajectories, for each $1\leq m \leq N$, we draw the initial value $Y_{m,1}$ from a discrete uniform distribution over $\mathcal S$ and the vector of covariates $\Z_m$. Next, given $Y_{m,k}$, $k\geq 1$, the  number $\tau_{m,k+1}$ of units of time before the next observation of the sample path is drawn from a Poisson distribution with parameter $\lambda$, to which we add 1. We let the parameter $\lambda$ of the Poisson distribution depend on the position $Y_{m,k}$. It is set to 10 if $Y_{m,k} \in \{ 1, 2\} $ and  to 15 if $Y_{m,k}\in \{3, \cdots, S\}$. Finally, given $\Z_{m} = \z$, $\tau_{m,k}=\ell$ and $Y_{m,k}=i$, we draw $Y_{m,k+1}$ from a multinomial distribution with the vector of parameters equal to the $i-$th row of the matrix $\boldsymbol{P}^\ell(\z)$. The steps are repeated $M_m$ times, where $M_m$ is the smallest integer such that $\sum_{k=1}^{M_m} \tau_{m,k} \geq L $. We set $L \in \{20, 40, 60\}$ allowing a longer trajectory as $L$ increases.
In the case without covariates, depending on the number of states, the matrix $\boldsymbol P$ in \eqref{sim_matrixP1} or \eqref{sim_matrixP2} is used instead to generate the sample paths of $X$. 
%\newpage

\subsubsection{Implementation aspects}\label{sec_implemnt}
%no method to compare it to
 
The estimator of the conditional transition matrix $\boldsymbol P(\z)$ is computed as presented in Section \ref{sec:chap2-model-estimation}. The principal logarithms of the stochastic  matrices $\widehat{\boldsymbol{A}}_{\ell}(\z)$ as defined in \eqref{eq:Ax-condi} are computed using the function \texttt{logm} from the  \texttt{R} package \texttt{expm}, see \cite{expm2024}. The \texttt{logm} proposes two methods, "Higham08" and "Eigen". As indicated in \cite{expm2024}, the method "Higham08" based on the inverse scaling and squaring method with Schur decomposition, and is designed for non-diagonalizable matrices, see \citep[Algorithm 11.9]{H2008}. The method "Eigen" which tries to diagonalize the matrix for which the logarithm has to be computed. Nevertheless, the \texttt{logm} function may fail to provide results with both methods, and this occurs with small samples. Without any regularization on the $\widehat{\boldsymbol{A}}_{\ell}(\z)$ such that the \texttt{logm} function can deliver a result, our approach requires as many as tens of thousands of sample paths of lengths as described below. The details are provided in the Supplementary Material. Keeping in mind applications where the number of  sample paths is very large, we focus on the next problem, that is, when the result of the  \texttt{logm} function is not a generator matrix. 
In such cases, regularization via weighted adjustments, as presented in Algorithm \ref{algo:regularization}, is applied. 
%Let $\widehat{a}_{ij} =\log\left(\widehat{A}_{\ell}(\z) \right)_{ij}$. We do not assume that $\sum_{j=1}^S \widehat{a}_{ij} = 0$ nor that $\widehat{a}_{ii} < 0$ and $\widehat{a}_{ij}> 0$ for $i \neq j$. We proceed to construct $\widetilde{a}$ such that it complies to the properties of Q-matrices. First, we consider intermediate parameters $\widehat{a}_{ij} = \max(0, \widehat{a}_{ij})$ for $i \neq j$, then in a second time we compute
%\begin{equation}
%    \widetilde{a}_{ij} = \widehat{a}_{ij} - |\widehat{a}_{ij}| \frac{\sum_{j=1}^S \widehat{a}_{ij}}{\sum_{j=1}^S |\widehat{a}_{ij}|}.
%\end{equation}
%The regularization via an adaptation of the diagonal elements was investigated, but it proved less efficient than the one with the weights. 
% For simplicity, let $\widehat{q}_{ij} =\log\left(\widehat{A}_{\ell}(\z) \right)_{ij}$. We do not assume that $\sum_{j=1}^S \widehat{q}_{ij} = 0$ nor that $\widehat{Q}_{ii} < 0$ and $\widehat{q}_{ij}> 0$ for $i \neq j$. We proceed to construct $\widetilde{Q}$ such that it complies to the properties of Q-matrices by 
% \begin{equation}
%     \widetilde{q}_{ij} = \begin{cases}
%     \max(0, \widehat{q}_{ij}) & \text{ if } i \neq j, \\
%     - \sum_{j=1, j\neq i}^S \max(0, \widehat{q}_{ij}) & \text{ if } i = j. 
%     \end{cases}
% \end{equation}
Details on how often regularization is required are provided in the Supplementary Material. The same steps for computing the estimator of the transition matrix are used in the case without covariates (unconditional), where the target is the matrix $\boldsymbol P$ in \eqref{sim_matrixP1} or \eqref{sim_matrixP2}. In this case, the estimator $\widehat{\boldsymbol P}$ is computed using formulae \eqref{eq:Ax-condi}, 
\eqref{recur1}, \eqref{recur2}, and \eqref{nice_pi} after removing  the kernel factors.

For kernel smoothing, we consider a standard Gaussian density, which in practice can be considered to have a compact support. %For simplicity, for each generated sample we use a single bandwidth $h_N$ equal to $(M_1 + \dots + M_N)^{-\alpha}$ with $\alpha = 1/(4+p)$, here $p=1$. The bandwidth depends on the number of effective transitions observed (rather than with the number of individuals). Before smoothing, the continuous variable is standardized using its empirical standard deviation.
 As we choose a Gaussian kernel and $p=1$, we make the simple choice of $h_m =C   \widehat{\sigma}_{Z_c} m^{-1/5} $ where $\widehat{\sigma}_{Z_c}$ is an estimator of the standard deviation of the continuous predictor and $C$ is a constant that we let depend on the number of states $S$.  The method is evaluated on a grid of 4 points by considering the median value $\z_c=1.5$ and $\z_c = 1.7$ from the support of the distribution for the continuous variable, and the support of the discrete variable. The values for $\underline L$ and $\overline{L}$ are 6 and 20, respectively.
To evaluate the performance of the estimator, we calculate the error matrix $\boldsymbol U(\z) = \widehat {\boldsymbol P}(\z) -\boldsymbol P (\z)$ and use its spectral norm, denoted $\|\cdot\|_2$. In the unconditional  case (without covariates) the error matrix is $\boldsymbol U = \widehat {\boldsymbol P} -\boldsymbol P $. We consider sample sizes $N \in\{  5000, 10000, 20000, 40000\}$ and each experiment is replicated 500 times. Sample sizes are expressed in terms of individuals in the sample, which can be associated with one  or more returns. 

\subsubsection{Simulation results}

Figure \ref{fig:simu-line-median-spec-uncond} shows the spectral norm of the error matrix $\boldsymbol U  $ on a logarithmic scale when the data are generated with $S \in \{3,5\}$ and $L\in \{20, 40, 60\}$ without covariates. As expected, the performance improves with the sample size in terms of the number of points on the trajectory (set by $L$) and the number of trajectories in the sample (set by $N$). The latter has a more prominent effect. %For the largest sample size, it reaches a value of -6 which associates to a spectral norm of 0.0025. 
Conversely, the performance deteriorates with the number of states in the model, as the complexity increases. With three states, estimating the transition probabilities involves determining six free parameters, considering that the rows should sum to 1. Similarly, 20 free parameters must be estimated for a chain with five states. 
The effect of letting the distribution of the variable $\tau_{m,k+1}$ depend  on $Y_{m,k}$ can be seen in the variance of the error for the estimation of the transitions from the states $\{3,..., S\}$, which are then larger than those for the estimation of the transition from the states $\{1,2\}$. This pattern, which we notice in the results that are not reported, can be explained by the fact that being in the states $\{3,..., S\}$ implies longer expected return times and thus fewer observed transitions.

\begin{figure}[t]
	\centering
	\caption{\small The median (on the log-scale) of the spectral norm of the error matrix $\boldsymbol U $ for the transition matrix  without covariates per sample sizes over 500 replications. %Lines are typed according to the size of the state space and colored according to the parameter $t_{\max}$.}
	}
	\includegraphics[width=0.8\linewidth, height=9cm]{./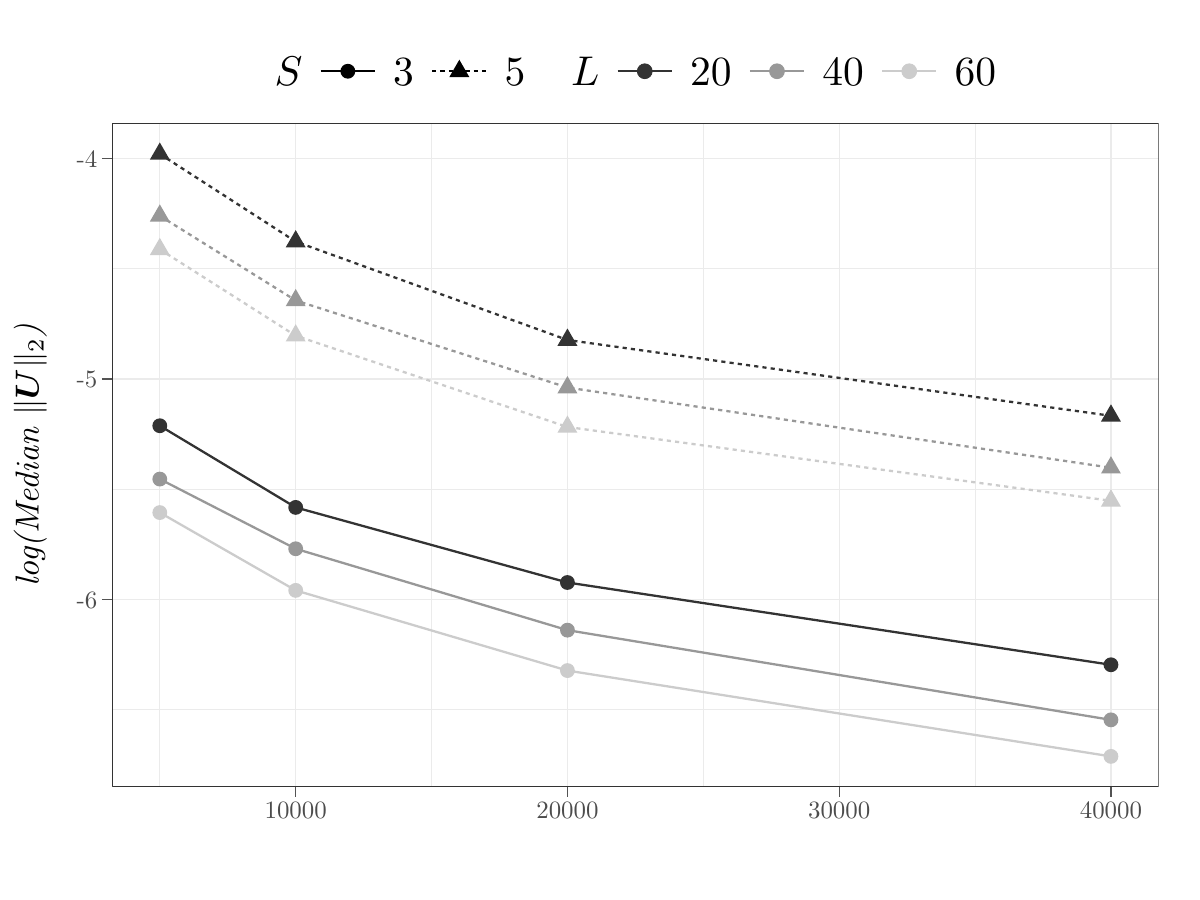}
	\label{fig:simu-line-median-spec-uncond}
	\vspace{-1cm}
\end{figure}

Figure \ref{fig:simu-line-median-spec-cond} displays similar results for $\boldsymbol U(\z)$ in the model with covariates. The lines represent  different combinations of covariates. %Sample sizes of 500 trajectories are intractable with covariates and are therefore not considered.
 The performance depends on the values of $\z_c$ and $\z_d$. For the discrete variable, the choice of parameters for the Bernoulli random variable is 0.7, there are far fewer data points available for $\z_d=0$, and the accuracy is lower in this case. Additionally, the increase in performance with higher values of the continuous covariate is likely a consequence of the setup  \eqref{eq:sim-link-func}, which makes the diagonal of the true matrix more dominant with $\z_c=1.7$, and our method performs better in such situations. Nevertheless, the convergence of the estimator becomes slow for large state spaces and $\z_d$ values at zero. 

\begin{figure}[ht!]
	\centering
	\caption{\small  The median (on the log-scale) of the spectral norm  of the error matrix $\boldsymbol U (\z)$ for the transition matrix with covariates per sample sizes over 500 replications.%, colored according to the value of the discrete covariate and typed for the value of the continuous covariate. 
	%The top panels consider a state space of 3 states and the bottom panels of 5. Left to right panels contain the subplots for different values for the parameter $t_{\max}$.}
	}
	%\vspace{-0.cm}
	\includegraphics[width=0.95\linewidth, height=11.5cm]{./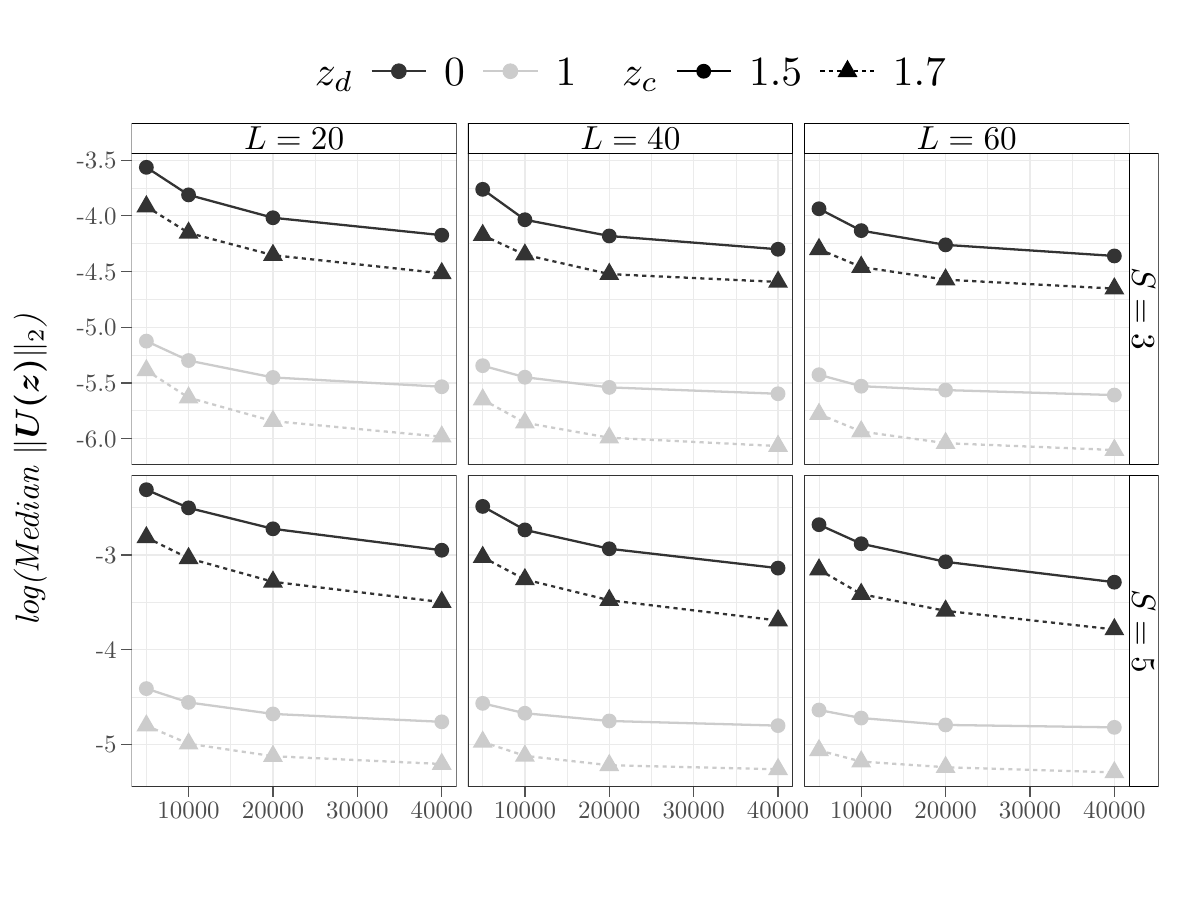}
	\label{fig:simu-line-median-spec-cond}
	%\vspace{-0.1cm}
\end{figure}

\begin{figure}[ht!]
	\centering
	\caption{\small Boxplots of the spectral norm (on the log-scale) of the error matrix $\boldsymbol U (\z)$ for 500 replications over different sample sizes. From left to right, the panels display different sizes of the state space. Top panels are for the case with covariates at different points of evaluation, and the last for the model without.}
	%\vspace{-0.2cm}
	\includegraphics[width=.95\linewidth, height=14cm]{./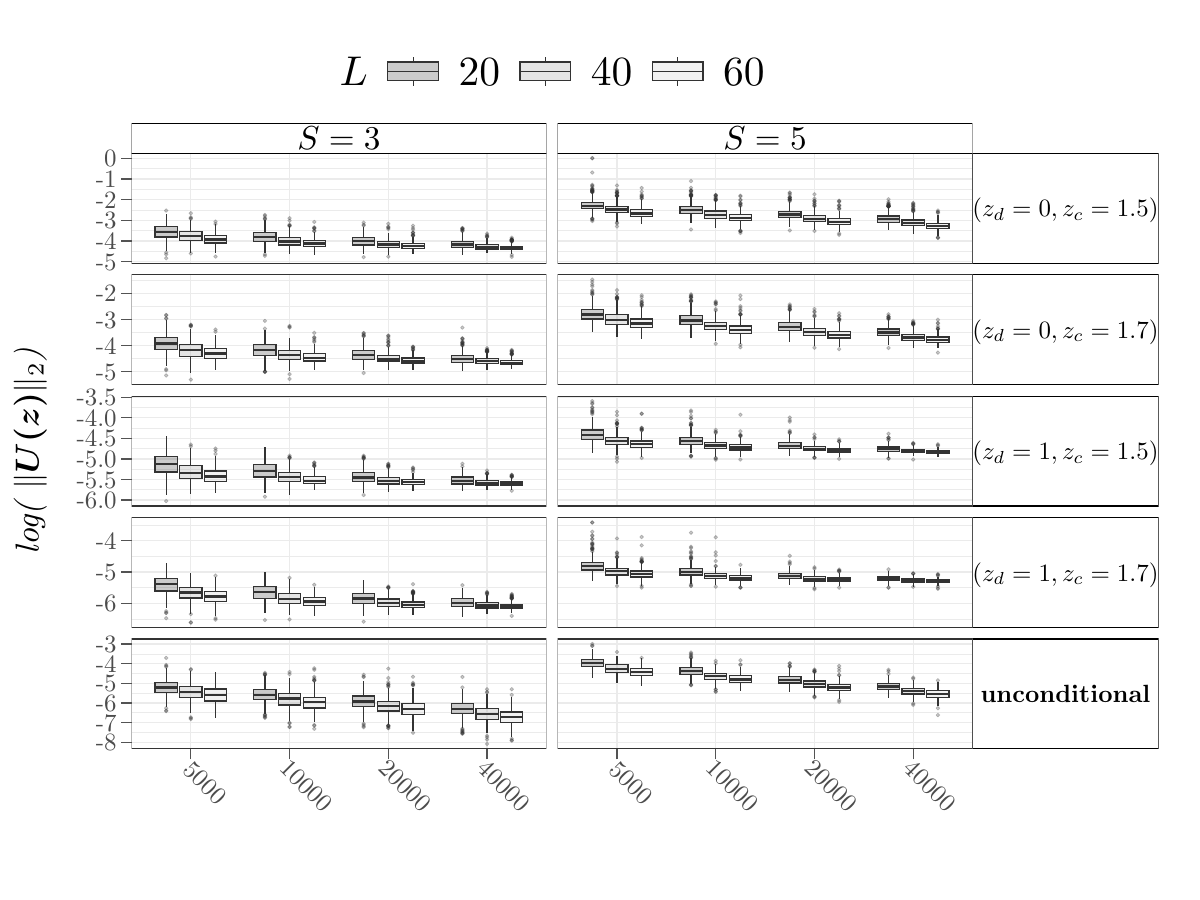}
	\label{fig:sim-boxplot-log-sec-norm}
	\vspace{-1cm}
\end{figure}

Figure \ref{fig:sim-boxplot-log-sec-norm} provides details of the spread of the spectral norm of the error matrix for different setups. The nonparametric approach for locally smoothing the effect of covariates appears appropriate, as the reduction in performance compared to the model without covariates, is marginal compared to the advantages gained in interpretation and flexibility. However, in practice, it is recommended to ensure that the transition matrix has a dominant diagonal. It is clear that the convergence rate is not uniform across the evaluation points. Better performances and rates are expected for more diagonally dominant transition matrices. Nonetheless, the number of states is the main reason for the performance decay. The Supplementary Material presents the results of the loss of performance with a state space of nine elements.

%\subsection{Real data analysis}\label{sec:chap2-real-data}
%\input{sections/real-data}

%------------------------------------------
\section{Conclusions}\label{sec:chap2-ccl}
%------------------------------------------
% !TeX root = ../MC.tex

We introduced a novel approach for modeling the conditional transition matrices of a discrete-time Markov chain with a finite state space. The main challenge is of a statistical nature. The data contain a large number of independent sample paths, but they are observed at random times, as few as two times. Using a set of assumptions that seem reasonable for applications in the circulation, of humans, animals, or goods, we propose a nonparametric approach for the estimation of a  conditional transition matrix given the predictors. The predictors are constant over time and characterize a sample path, which is our statistical unit. The main idea is that the conditional transition matrix of interest is the $\ell-$root of conditional transition matrix after $\ell$ periods of time, which can be easily computed from the data, either in batch or recursively. The effect of the predictors is considered using kernel smoothing. Thus, taking the exponential of $\ell^{-1}$ times the logarithm of the conditional transition matrices after $\ell$ periods, estimated from the data, is expected to  yield an estimate of the conditional transition matrix of interest. Some regularization methods are necessary to make this simple idea implementable, and here, we propose to do it simply on the logarithms of the estimated transitions. 

The rates of uniform convergence of our matrix estimators are derived. The optimal rate for the bandwidth in kernel smoothing is calculated and related to the standard (batch) optimal kernel regression  bandwidth, when our estimates are calculated recursively. Simulation experiments revealed that the conditional transition matrix of the Markov chain can be estimated accurately and with little computing resources using existing \texttt{R} software, even if the number of sample paths is large.

%------------------------------------------
\appendix
\section{Appendix} 

% !TeX root = ../MC.tex

In this appendix, we recall some theoretical facts for finite-dimensional Markov transition matrices and provide proofs for the statements in Section \ref{sec:chap2-model-theory}. In Section \ref{app_exp_log}, the definitions and basic properties of exponential and logarithmic functions on matrices are presented. Some properties of the embeddable Markov matrices are recalled in Section \ref{M_matrix_sto}. In Sections \ref{sec_Markov_Y} and \ref{add_prooofs} we provide proofs. 

%In the Section \ref{ch2_app1} below we provide ***. In Section \ref{ch2_app_proofs} 

\subsection{Exponential and logarithm of matrices}\label{app_exp_log}

On the real line, the exponential can be defined by the limit of a series. This definition by series expansion can be easily extended to matrices. More precisely, for any square (finite-dimensional) matrix $\boldsymbol{Q}$, the exponential of $\boldsymbol{Q}$ is defined as 
$$
\exp(\boldsymbol{Q}) = \sum_{k=0}^\infty \frac{\boldsymbol{Q}^k}{k!},
$$
with $\boldsymbol{Q}^0 = \boldsymbol I$ and $\boldsymbol I$ the identity matrix. The convergence of the series can be understood as component-wise or using any matrix norm, and holds for any $ \boldsymbol{Q}$.  If $\boldsymbol{Q}_1$ and $\boldsymbol{Q}_2$ commute, then $\exp(\boldsymbol{Q}_1+\boldsymbol{Q}_2) = \exp(\boldsymbol{Q}_1)\exp(\boldsymbol{Q}_2).$
In particular, $(\exp(\boldsymbol{Q}/\ell))^\ell = \exp(\boldsymbol{Q})$, for all $\ell \geq 1$.
Moreover, if $ \boldsymbol{P}  = \exp(\boldsymbol{Q}) $, then for all $\ell \geq 1$,
\begin{equation}\label{eq:P_exp_A}
	\boldsymbol{P}^\ell   = \exp(   \ell \boldsymbol{Q} ) .
\end{equation}
Contrary to the exponential function of real numbers, the matrix exponential is not injective. This means that it can happen that $\exp(\boldsymbol{Q}_1)=\exp(\boldsymbol{Q}_2)$ even if $\boldsymbol{Q}_1\neq \boldsymbol{Q}_2 $.

The definition of the logarithm of a matrix is subtle. First, let us  recall the idea of defining the matrix function by the limit of a series. 
Whenever $\boldsymbol A= \boldsymbol I + \boldsymbol B$  with $\rho(\boldsymbol B)<1$, the logarithm of  $\boldsymbol A$ can be defined through the following series
\begin{equation}\label{series_logA}
\log(\boldsymbol A) = \log(\boldsymbol I+ \boldsymbol B) = \boldsymbol B-\frac{\boldsymbol B^2}{2}+\frac{\boldsymbol B^3}{3} -\frac{\boldsymbol B^4}{4} +\cdots,
\end{equation}
where $ \rho(\boldsymbol B)$ is the spectral radius (the largest eigenvalue in absolute value) of $\boldsymbol B$. 
%Recall that the spectral radius is smaller or equal than any (consistent) matrix norm.
%, that is a norm for which $\|\boldsymbol A \boldsymbol B\|\leq \|\boldsymbol A\| \|\boldsymbol B\|$.  %It is somewhat slow but also works for nondiagonalizable matrices.

The radius of convergence for the logarithm series is equal to 1, and it is easy to understand the impossibility of using the series expansion to define the logarithm of general matrices. To introduce the definition in full generality, let us recall that for $\boldsymbol A$ a $K\times K-$matrix with  eigenvalues $\lambda_1,\ldots,\lambda_L\in\mathbb C$, $L\leq K$, there exists a nonsingular matrix $\boldsymbol V$ such that 
$$
\boldsymbol A=\boldsymbol V \boldsymbol D \boldsymbol V^{-1}  \text{ where } \boldsymbol D= \begin{pmatrix}
	\boldsymbol D_1 & \cdots & 0 \\ 
	\vdots & \ddots & \vdots \\ 
	0      & \cdots & \boldsymbol D_L
\end{pmatrix} \text{ with } \boldsymbol D_l= \begin{pmatrix}
	\lambda_l & 1 & 0 & \cdots & 0 \\ 
	0 & \lambda_l & 1 & \cdots & 0\\
	\vdots &  \ddots & \ddots & \vdots & \vdots\\
	0 & 0 & \cdots & \lambda_l & 1\\ 
	0 & 0 & \cdots & \cdots & \lambda_l
\end{pmatrix},$$
and $\boldsymbol D_l$ is an $(r_l\times r_l)-$matrix  and $\sum_{1\leq l\leq L} r_l = K$. This decomposition of $\boldsymbol A$ is known as the \emph{Jordan canonical form}. The Jordan matrix $\boldsymbol D$ is unique up to the ordering of blocks $\boldsymbol D_l$, but the transforming matrix $\boldsymbol V$ is not unique. However, 
the definition yields an $f(\boldsymbol A)$ that can be shown to be independent of the particular Jordan canonical form used. See \cite[Section 1.2]{H2008}. Assuming the required regularity for a function $f$, the matrix $f(\boldsymbol A)$ is defined as
$$
f(\boldsymbol A)=\boldsymbol V f(\boldsymbol D) \boldsymbol V^{-1}  \text{ where }  f(\boldsymbol D)= \begin{pmatrix}
	f(\boldsymbol D_1)& \cdots & 0 \\ 
	\vdots & \ddots & \vdots \\ 
	0      & \cdots &f(\boldsymbol D_L)
\end{pmatrix} ,
$$
with 
$$
f(\boldsymbol D_l)= \begin{pmatrix}
	f(\lambda_l) & f^\prime (\lambda_l) & f^{\prime\prime} (\lambda_l)/2! & \cdots & f^{(r_l-1)} (\lambda_l)/(r_l-1)! \\ 
	0 & \lambda_l &  f^\prime (\lambda_l)  & \cdots & f^{(r_l-2)} (\lambda_l)/(r_l-2)! \\
	\vdots &  \ddots & \ddots & \vdots & \vdots\\
	0 & 0 & \cdots & \lambda_l &  f^\prime (\lambda_l)  \\ 
	0 & 0 & \cdots & \cdots & f(\lambda_l)
\end{pmatrix}. 
$$

The eigenvalues of $\boldsymbol A$ can be complex numbers and thus the logarithm of a matrix is not uniquely defined.
Indeed, a complex number is written $z=|z|e^{i\theta}$ where $\theta$ is the argument of $z$ (also denoted $\text{arg}(z)$). Since $e^{i\theta}=e^{i(\theta+2k\pi)}$, for any integer $k$, the argument of a complex number is not uniquely defined. The logarithm of $z\neq 0$ is then defined as $\log(z) = \log(|z|) + i \text{arg}(z)$. A branch of $\log(z) $ is defined by fixing the range of the imaginary part $\text{arg}(z)$, typically $(-\pi,\pi]$. The \emph{principal logarithm} (or principal branch of the logarithm) is the logarithm whose imaginary part lies in the interval $(-\pi,\pi]$. Moreover, the  derivatives of the logarithm are not defined at the origin; therefore the logarithm of matrices with zero eigenvalues cannot be defined. 
%If $A$ is real and nonsingular with no eigenvalues on the negative real axis then $A$ has a real logarithm. 

As pointed out by \cite[Section 1.4]{H2008}, one of the main uses of matrix functions is to solve nonlinear matrix equations,
such as $g(\boldsymbol X) = \boldsymbol A$, in particular the equation with $g(\cdot)$ the exponential. Any solution of the nonlinear equation  $\exp(\boldsymbol X)= \boldsymbol A$ is called a logarithm of $A$. However, for certain matrices $\boldsymbol A$, some of the solutions of $\exp(\boldsymbol X)=\boldsymbol A$  cannot be produced by the definitions of $f(\boldsymbol A)$ above, with $f(t) = \log(t)$. 
These solutions $\boldsymbol X$ are called \emph{nonprimary matrix functions}, whereas those obtained by the definition of $f(\boldsymbol A)$ are called \emph{primary matrix functions}.

The following properties are stated in \cite[Theorems  1.23, 1.31 and 11.2]{H2008}.
\begin{itemize}
	\item Existence of  real logarithm:
	\begin{enumerate}
		\item The nonsingular real matrix $\boldsymbol A$ has a real logarithm if and only if $\boldsymbol A$ has
		an even number of Jordan blocks of each size for every negative eigenvalue. 
		%See \cite[Th. 1]{C1966}. 
		\item If $\boldsymbol A$ has negative eigenvalues then no primary 
		logarithm is real.
	\end{enumerate}
	\item If the real matrix $\boldsymbol A$ has no eigenvalues in $(-\infty,0]$, there exists a unique logarithm of $\boldsymbol A$ which is real.  This matrix, is then the principal logarithm of $\boldsymbol A$. In particular, if $\rho(\boldsymbol A)<1$, then the Mercator series in \eqref{series_logA} is the principal logarithm of $\boldsymbol A$.
	
	\item If the real matrix $\boldsymbol A$ has no eigenvalues in $(-\infty,0]$, then for any $\alpha\in[-1,1]$ we have $\log(\boldsymbol A^\alpha) = \alpha\log(\boldsymbol A)$.
\end{itemize}

%%%%%%%%%%%%%%%%%%%%
%%%%%%%%%%%%%%%%%%%%
%%%%%%%%%%%%%%%%%%%%

\subsection{Facts on the embeddable Markov matrices}\label{M_matrix_sto}

Let us first recall the definition of a well-known class of embeddable transition matrices. A  square matrix $\boldsymbol A$ is a nonsingular \emph{$M-$matrix} if $\boldsymbol A = s \boldsymbol I - \boldsymbol B$ with $\boldsymbol B$ a matrix with nonnegative elements and $s> \rho(\boldsymbol B)$. See, e.g., \cite{HL2011}, \cite{VB2018}. It is a standard property that the inverse of a nonsingular $M-$matrix has nonnegative entries. Moreover, the following property holds~: if the stochastic matrix $\boldsymbol P$ is the inverse of an $M-$matrix then $\boldsymbol P^{1/\ell}$ exists and is stochastic for all $\ell$.  See, e.g., \cite[Th. 3.6]{HL2011}

%%%%%%%%%%%%%%%%%%%%
%%%%%%%%%%%%%%%%%%%%
%%%%%%%%%%%%%%%%%%%%

%\subsection{Uniqueness of the generator}

In the second part of this section, we recall the conditions for the uniqueness of the generator of embeddable matrices. 
Recall that a Markov chain with transition matrix $\boldsymbol P$ is \emph{embeddable} (or the transition matrix $\boldsymbol P$ is embeddable) if and only if there exists a matrix $\boldsymbol G$ with nonnegative off-diagonal entries whose row sums all vanish, such that $\boldsymbol P = \exp(\boldsymbol G)$. The matrix $\boldsymbol G$ is usually called a \emph{generator} matrix, and alternative terminologies are intensity or rate matrix, skeleton, or $Q-$matrix. An embeddable matrix does not necessarily possess a unique generator. This identifiability problem is related to the  uniqueness of the real logarithm of a matrix. Some  sufficient conditions for the uniqueness are:
\begin{enumerate}
	\item If  $\boldsymbol P$ has distinct, real, and positive eigenvalues then the only real logarithm, and hence the only candidate generator, is the principal logarithm. See \cite[Section 2.3]{H2008}. 
	\item If  $\boldsymbol P$ 	admits a generator matrix, it is unique if and only if all the eigenvalues of   $\boldsymbol P$
	are positive, real and, no Jordan block of  $\boldsymbol P$
	belonging to any eigenvalue appears more than once. See \cite[Theorem 2]{C1966}. 
	
	\item Let  $\boldsymbol P$  be an embeddable Markov matrix. Then if any of the conditions 
		$$(a) \quad \min_{i}  \boldsymbol P _{ii} > \frac{1}{2};\qquad \qquad (b) \quad 
		\left[\min_{i}  \boldsymbol P _{ii} \right] \operatorname{det} (\boldsymbol{P}) > \exp(-\pi) \prod_{i}  \boldsymbol P _{ii},
		$$
holds true, the generator of $\boldsymbol P$ is unique. 
 See \cite{C1972,C1973}.
\end{enumerate}

%%%%%%%%%%%%%%%%%%%%
%%%%%%%%%%%%%%%%%%%%
%%%%%%%%%%%%%%%%%%%%

\subsection{Proof of the Markov property for $(Y_k)_{k\geq 1 }$}\label{sec_Markov_Y}

Below, the symbol $\lesssim$ means the left side is bounded above  by a  constant times the right side. Moreover, $a\gtrsim b$ means $b\lesssim a$, and $a\sim b$ means $a\lesssim b$ and $b\lesssim a$. $\overline C$, $\mathfrak C $, $\mathfrak c, \mathfrak c^\prime $ are constants, possibly different from line to line, but not depending on $N$,  $\z$ or $c$ (the bandwidth factor).  

\medskip

\begin{proof}[Proof of the Markov property for $(Y_k)_{k\geq 1 }$]
Indeed, for any $k\geq 1$ we have
\begin{multline}
	\mathbb P (Y_{k+1} = j \mid Y_{k} = i,  Y_{k-1}, \ldots Y_1 ,\Z=\z ) 
	\\ \hspace{-5.2cm}= \sum_{\ell \geq 1}     \mathbb P (Y_{k+1} = j, \tau_{k+1} = \ell \mid Y_{k} = i,  Y_{k-1}, \ldots Y_1, \Z=\z ) \\
	\hspace{.8cm}  =  \sum_{\ell \geq 1}    \sum_{\ell_k,\ldots,\ell_1\geq 1} \mathbb P (Y_{k+1} = j, \tau_{k+1} = \ell \mid Y_{k} = i,  Y_{k-1},\ldots Y_1, \tau_k=\ell_k,\dots,\tau_1=\ell_1, \Z=\z ) \\
	\times \mathbb P ( \tau_k=\ell_k,\dots,\tau_1=\ell_1 \mid Y_{k-1}, \ldots Y_1, \Z=\z )
	\\  \hspace{-4cm}=\sum_{\ell \geq 1} \mathbb P (Y_{k+1} = j , \tau_{k+1} = \ell \mid Y_{k} = i, \Z=\z) \\ 
	\hspace{1.5cm}\times \sum_{\ell_k,\ldots,\ell_1\geq 1} \mathbb P ( \tau_k=\ell_k,\dots,\tau_1=\ell_1 \mid Y_{k-1}, \ldots Y_1, \Z=\z )\\
	\hspace{-.9cm} =\sum_{\ell \geq 1} \mathbb P (Y_{k+1} = j , \tau_{k+1} = \ell \mid Y_{k} = i, \Z=\z)
	\\  = \mathbb P (Y_{k+1} = j \mid Y_{k} = i, \Z=\z).
\end{multline}
\end{proof}

\medskip

\begin{proof}[Proof of identity \eqref{power_Y1}]
For any $i,j\in\mathcal S$, $\ell\geq 1$ and $\z\in\mathcal Z$, we can write
\begin{multline}
	\mathbb P (Y_{k+1} = j \mid Y_{k} = i,  \tau_{k+1} = \ell, \Z=\z) 
	\\ \hspace{-1.8cm} = \sum_{t\geq 1}\mathbb P (Y_{k+1} = j \mid Y_{k} = i,  \tau_{k+1} = \ell, T_k=t, \Z=\z)\\ \hspace{-1cm}\times \mathbb P (T_k=t \mid Y_{k} = i,  \tau_{k+1} = \ell,  \Z=\z) 
	\\ \hspace{1.1cm} =  \sum_{t\geq 1}\mathbb P (X_{T_{k} + \ell} = j \mid X_{T_k} = i,  T_{k+1} = t+ \ell, T_k=t, \Z=\z) \\ \hspace{1cm}\times \mathbb P (T_k=t \mid Y_{k} = i,  \tau_{k+1} = \ell,  \Z=\z) \\
	\hspace{1.9cm}= \left(\boldsymbol P^{\ell}(\z)\right)_{ij}\times \sum_{t\geq 1}\mathbb P (T_k=t \mid Y_{k} = i,  \tau_{k+1} = \ell,  \Z=\z)
	\\= \left(\boldsymbol P^{\ell}(\z)\right)_{ij}.
\end{multline}
This proves the identity \eqref{power_Y1}.
\end{proof}

%%%%%%%%%%%%%%%%%%%%
%%%%%%%%%%%%%%%%%%%%
%%%%%%%%%%%%%%%%%%%%

\subsection{Proofs of the results}\label{add_prooofs}

\begin{proof}[Proof of Lemma \ref{lemma_ind}]
	If $(\tau_k)_{k\geq 1}\perp X \mid \Z$, and the $\tau_k$, $k\geq 1$, are conditionally i.i.d. given $\Z=\z$, for any $\z\in\mathcal Z$, then we have $\mathbb P (\tau_{k+1} = \ell \mid Y_{k} = i, \Z=\z)=\mathbb P (\tau_{k+1} = \ell \mid \Z=\z)$, for any $ i\in \mathcal S$ and any  $\ell \geq 1$, $k\geq 1$. Moreover, $\mathbb P (\tau_{k+1} = \ell \mid \Z=\z)$ does not depend on $k$. Thus Assumption \ref{assump4} holds true.  By our definitions and the conditional independence assumption $(\tau_k)_{k\geq 1}\perp X \mid \Z$, for any $ i_{k-1},\ldots,i_0\in\mathcal S$ and $\ell\geq 1$,
	the conditional distribution of 
	$$
	Y_{k+1} \mid \tau_{k+1}=\ell, Y_k=i,\tau_k=\ell_k,Y_{k-1}=i_{k-1},\ldots,\tau_1=\ell_1,Y_0=i_0,\Z=\z,
	$$ 
	coincides with that of $X_{t+\ell} \mid \tau_{k+1}=\ell, X_t=i,\tau_k+\cdots+\tau_1+T_0=t,\Z=\z$, for any $\ell_k,\ldots,\ell_1,t$ such that  $t=\ell_k+\cdots+\ell_1+T_0$, and is determined by $i$ and  the $j$-th row of the matrix $\boldsymbol{P}^{\ell}(\z)$. Gathering facts, we deduce that the Assumptions \ref{assump2} and \ref{assump3} hold true.  \end{proof}
	
	\medskip

	\begin{proof}[Proof of Proposition \ref{stoch1}]
		Let $\Omega_N=\sum_{m=1}^N m^{\beta}$ with $\beta>0$ from Assumption \ref{assump1_th}-\ref{mker1}, and recall that  $w_{N,m}= m^{\beta} = \omega_{N,m}\Omega_N^{-1}$. Since $\beta<1$,  by the technical Lemma \ref{serie_Riemann}, we have $\Omega_N\sim N^{1+\beta}$. Let $i,j\in\mathcal S$, $1\leq \ell \leq L$ be arbitrarily fixed.	For any $\z\in \mathcal Z$,    let
		\begin{equation}\label{generic_app_b}
			G_m(\z;c)=  
			    a_{m} \mathcal I_{m}  h_m^{p}\boldsymbol K_{h_m}  \left( \Z_m-\z  \right), \quad \text{ where } \quad a_{m} = w_{N,m} h_m^{-p}, \quad \mathcal I_{m}=\sum_{k=1}^{M_m}\mathcal I_{m,k},
		\end{equation}	 
		with $	\mathcal I_{m,k}=\mathcal I_{m,k}(i,j;\ell) =  \mathds{1} \left\{ Y_{m,k} = j, \tau_{m,k}=\ell, Y_ {m,k-1} = i \right\},
		$
		and $c$ is the constant from the bandwidths $  h_m$. By our assumptions, $M_m$, $m\geq 1$  is a sequence of independent, positive integer-valued random variable bounded by $L$. Therefore, $\mathcal I_m$, $m\geq 1$  is also a sequence of independent, positive integer-valued random variable bounded by $L$.
		In view of the  definition of $G_m$, we write $ \widehat{U}_{N}(\z;c)  $ instead of $\widehat{U}_{N}(\z)$, and we have $ \widehat{U}_{N}(\z;c) = \Omega_N^{-1}  \sum_{m=1}^N G_m(\z;c)$.

		By our assumptions, $a_m\sim m^{\alpha p +\beta }$.   
		We apply Bernstein's inequality  which states that, for any   $X_1,\ldots, X_N$  zero mean independent variables, such that $|X_m|\leq C_N$, $\forall 1\leq m\leq N$, we have
		$$
		\forall v\geq 0, \quad \PP \left(  \left|\sum_{m=1}^N X_m\right| \geq v\right)\leq 2 \exp\left( - \frac{v^2/2}{\sigma^2_N+C_Nv/3}\right),\quad \text{ where } \quad \sigma_N^2 = \sum_{m=1}^N \EE (X_m^2).
		$$
		See, for example, \citep[Theorem 2.8.4]{V2018}. Let 
		\begin{equation}\label{xi_def}
			X_m = X_m(\z;c) = G_m(\z;c) - \EE[G_m(\z;c)],
		\end{equation}
		which implies  $C_N \sim \max\{a_1,\ldots ,a_N\} \sim N^{\alpha p +\beta }$. By Lemma \ref{serie_Riemann} and  since  $\mathcal I_m$  is bounded by $L$, we have 
		\begin{equation}\label{sigma_rate}
			\sigma^2_N \sim \sum_{m=1}^N m^{\alpha p +2\beta }\;\EE [\mathcal I^2_m h_m^p\{\boldsymbol K_{h_m}  \left( \Z_m-\z  \right)\}^2  ] 
			\lesssim L^2 N^{\alpha p +2\beta+1 }.
		\end{equation}
		The factor $L^2$  is a rough bound for the right hand side of this inequality. The bound  can be refined to $L \sup_{\z\in \mathcal Z} \PP(Y_{m,k} = j, \tau_{m,k}=\ell, Y_ {m,k-1} = i\mid \Z=\z)$, but this would not change the rate of convergence.
		Taking $v= \overline C (\log N)^{1/2} N^{(\alpha p +1)/2 + \beta}$ with $\overline C$ some large constant, since $(\alpha p +1)/2< 1$, we have
		$$
		\frac{v^2/2}{\sigma^2_N+C_Nv/3} \gtrsim \frac{\overline C^2 N^{(\alpha p +1) + 2 \beta}\times \log N}{N^{\alpha p +2\beta+1 } + N^{\alpha p +\beta } \times \overline C  (\log N)^{1/2}N^{(\alpha p +1)/2 + \beta} }\sim \overline C \log N.
		$$
		Noting that $v\Omega_N^{-1} \sim  (\log N)^{1/2} N^{(\alpha p -1)/2 }$, we deduce that, for any $\z\in\mathcal Z$,
		\begin{multline}\label{az56}
			\PP \left(\left|\widehat{U}_{N}(\z;c)- \EE\big [\widehat{U}_{N}(\z;c)\big] \right| \geq\overline C (\log N)^{1/2} N^{(\alpha p -1)/2 } \right) 
			\\=	\PP \left( \Omega_N^{-1}  \left|\sum_{m=1}^N X_m(\z;c)\right| \geq \overline C (\log N)^{1/2} N^{(\alpha p -1)/2 } \right) 
			\leq 2 \exp(-\mathfrak c \times \overline C \log N),
		\end{multline}
		with $\mathfrak c$ a constant depending only on $\alpha,\beta$, the  $\underline c,\overline c$ giving the bandwidth range in condition \ref{mker1} of Assumption \ref{assump1_th}, the maximal length $L$ of a sample path,  and  the uniform norm of $\boldsymbol K$.   The pointwise rate from Proposition \ref{stoch1} follows.

		To obtain the uniformity with respect to  $\z$ and the constant $c$ in the bandwidth, we consider a  grid of $\mathcal G$ points $(\z_{\ell^\prime }, c_{\ell^{\prime\prime}})$ in the set $ \mathcal Z\times [\underline c , \overline c]$, with $\mathcal G\sim N^{\gamma}$, $\gamma \geq 2$. The grid is constructed as the Cartesian product between an equidistant grid in the hyperrectangle $\mathcal Z_c\times [\underline c, \overline c]$ and the discrete finite set $\mathcal Z_d$. Using the Lipschitz conditions on the kernel $ K(\cdot)$, for any $\z,c$, indices $ l^\prime,  l^{\prime\prime}$ exist such that 
		$$
		|G _m(\z;c) - G _m(\z_{ l^\prime }; c_{ l^{\prime\prime}})|\leq \mathfrak C N^{-1}, \qquad  \forall 1\leq m \leq N,
		$$ 
		where $\mathfrak C$ is some constant. As a consequence, 
		$$
		\left|\{G _m(\z;c) - \EE[G _m(\z;c) ]\} - \{G _m(\z_{ l^\prime }; c_{ l^{\prime\prime}})- \EE[G _m(\z_{ l^\prime }; c_{ l^{\prime\prime}})]\}\right|\leq \mathfrak C N^{-1}, \quad   \forall 1\leq m \leq N.
		$$ 
		From this, inequality \eqref{az56}   and Boole's union bound inequality, we deduce that constants $\mathfrak c, \mathfrak c^\prime, \mathfrak C$ exist such that, 
		\begin{multline}
			\PP \Big(\sup_{\z,c}\left|\widehat{U}_{N}(\z;c)- \EE\big[\widehat{U}_{N}(\z;c)\big] \right| \geq \overline  C (\log N)^{1/2} N^{(\alpha p -1)/2 } \Big) 
			\\	\leq 2  |\mathcal G|   \exp(-\mathfrak c \times \overline C \log N)\\ \leq 2 \exp\{\mathfrak c^\prime  \gamma  \log N-\mathfrak c  \overline  C \log N\}\rightarrow 0.
		\end{multline}
		Here, $|\mathcal G|$ denotes the cardinal of $\mathcal G$. The  convergence to zero is guaranteed by the fact that  $\overline C$ can be large. This proves the result. \end{proof} 
	
	\medskip

\begin{proof}[Proof of Corollary \ref{stoch1_cor}]
	Let $a_N:=    N^{-(1-\alpha p)/2}\sqrt{\log N } $. 
We first show that $\EE [\widehat U_{T,N}(i;\z,\ell)] $ is uniformly  bounded away 	from zero. By the definitions and Assumption \ref{assump1_th}-\ref{mxia2c}, a constant $C>0$ exists such that 
\begin{multline}
\EE \left[\widehat U_{T,N}(i;\z,\ell)\right] = \Omega_N^{-1}  \sum_{m=1}^N \sum_{j\in\mathcal S} \EE \left[ G_m(\z;c) \right] \\\hspace{-2cm}  = \Omega_N^{-1}  \sum_{m=1}^N w_{N,m} \EE\left[\EE\left\{\sum_{j\in\mathcal S}  \mathcal I_m \mid \Z_m\right\} \boldsymbol K_{h_m}  \left( \Z_m-\z  \right)\right]\\
\hspace{1cm}  =\Omega_N^{-1}  \sum_{m=1}^N w_{N,m} \int \PP (  \tau_{k}=\ell, Y_ {k-1} = i \mid \Z=\z^\prime) f_{\Z}(\z^\prime )\boldsymbol K_{h_m}  \left( \z^\prime-\z  \right) d\mu(\z^\prime) \\ >C\Omega_N^{-1}  \sum_{m=1}^N w_{N,m} \int \boldsymbol K_{h_m}  \left( \z^\prime-\z  \right) d\mu(\z^\prime) =C, 
\end{multline}
$\forall i\in\mathcal S$, $\ell \in[1,L]$, $\z\in\mathcal Z$, where $\mu$ is the product measure between the Lebesgue measure in $\mathbb R^p$ and the counting measure, and $\mathcal I_m$ is defined in \eqref{generic_app_b}.  Summing over the values of $j$, we deduce that $\EE [\widehat U_{B,N}(i;\z,\ell)] $ is also uniformly  bounded away 	from zero. From this and Proposition \ref{stoch1}, we can write
\begin{multline}
	\widehat U_{B,N}(i;\z,\ell)	^{-1} -\EE \left[\widehat U_{B,N}(i;\z,\ell)\right] ^{-1} \\ \hspace{-2cm}=\left\{ \EE \left[\widehat U_{B,N}(i;\z,\ell)\right] +  O_\PP(a_N)\right\}^{-1}-\EE \left[\widehat U_{B,N}(i;\z,\ell)\right] ^{-1}\\	
	=  O_\PP(a_N) \left\{ \EE \left[\widehat U_{B,N}(i;\z,\ell)\right] +  O_\PP(a_N)\right\}^{-1}\EE \left[\widehat U_{B,N}(i;\z,\ell)\right] ^{-1}	=  O_\PP(a_N),
\end{multline}
uniformly with respect to $i$, $\ell$ and $\z$. By Proposition \ref{stoch1} we also have
$$
\widehat U_{T,N}(i,j;\z,\ell)  -\EE \left[\widehat U_{T,N}(i,j;\z,\ell)\right]  =  O_\PP(a_N),
$$
uniformly with respect to $i,j$, $\ell$ and $\z$. Gathering facts, the justification of the first part of the Corollary is complete. The second part, concerning $\widehat \pi_\ell (\z)$, follows by Proposition \ref{stoch1_cor} and the identity \eqref{nice_pi}. \end{proof}
 
 \medskip

 \begin{proof}[Proof of Proposition \ref{bias1}]
 	Recall that 
 	$G_m(\z;c)$ is defined in \eqref{generic_app_b}. 
 	For any $\forall \z\in\mathcal Z_\epsilon$, by a change of variables and the second order Taylor expansion,   we get 
 	\begin{multline}\label{T_E}
 		\EE [w^{-1}_{N,m}G_m (\z;c)]:= \EE\left[\EE\left\{  \mathcal I_m \mid \Z_m\right\} \boldsymbol K_{h_m}  \left( \Z_m-\z  \right)\right] \\
 		=\int \PP (Y_ {k} = j,  \tau_{k}=\ell, Y_ {k-1} = i \mid \Z=\z^\prime) f_{\Z}(\z^\prime )\boldsymbol K_{h_m}  \left( \z^\prime-\z  \right) d\mu(\z^\prime)
 		\\= \PP (Y_ {k} = j,  \tau_{k}=\ell, Y_ {k-1} = i \mid \Z=\z) f_{\Z}(\z ) \\
 		+ \frac{  h_m^2}{2} \mu_2 (\boldsymbol K) \operatorname{Trace}(\mathcal H_G(\z)) \{1+o(1)\},
 	\end{multline}
 	provided $m$ is sufficiently large. Here, $\mathcal H_G(\z)$ denotes the Hessian matrix of the map  $\z\mapsto \PP (Y_ {k} = j,  \tau_{k}=\ell, Y_ {k-1} = i \mid \Z=\z) f_{\Z}(\z )$,  which, by our Assumptions \ref{assump1_th}-(\ref{mxia2}, and \ref{mxia2b}) is uniformly bounded. 
 	The calculations in \eqref{T_E} are valid as soon as $m$ is sufficiently large such that $\z+ h_m\mathbf u\in\mathcal Z_c$. 
 	Next, since 
 	$$
 	\EE \left[\widehat{U}_{N}(\x)\right]= \Omega_N^{-1}\sum_{m=1}^N w_{N,m}  \EE [w^{-1}_{N,m} G_m (\z;c)],  
 	$$
 	and, by Lemma \ref{serie_Riemann}, $\Omega_N \sim N^{\beta + 1}$ and  $\sum_{m=1}^N \omega_{N,n}  h_m^2\sim N^{\beta + 1 -2\alpha}$, the result follows. 
 	\end{proof}

 \medskip
 
\begin{proof}[Proof of Proposition \ref{final_touch}]
The entries of $\widehat{\boldsymbol A}_\ell (\z)$ converge uniformly to those of ${\boldsymbol A}_\ell (\z)$, and this implies that the two spectra gets close, uniformly with respect to $\z\in\mathcal Z_\epsilon$. More precisely,  
for any two $S\times S-$matrices $\boldsymbol A$ and $\boldsymbol B$, with respective spectra $\{\lambda_{\boldsymbol A, 1},\ldots,\lambda_{\boldsymbol A, S}\}$ and $\{\lambda_{\boldsymbol B,1},\ldots,\lambda _{\boldsymbol B, S}\}$, it holds that 
$$
\max\{s_{\boldsymbol A}(\boldsymbol B), s_{\boldsymbol B}(\boldsymbol A)\}\leq \left\{\|\boldsymbol A\| + \|\boldsymbol B \|\right\}^{1- 1/S} \|\boldsymbol A - \boldsymbol B\|^{1/S},
$$
where $\|\cdot\|$ is the spectral norm and  $s_{\boldsymbol A}(\boldsymbol B)$ is the spectral variation, that is 
$$
s_{\boldsymbol A}(\boldsymbol B) = \max_{j} \min_{i} \left| \lambda_{\boldsymbol A, i}   - \lambda_{\boldsymbol B, j} \right|.
$$ 
See \cite{E1985}. We  apply this result with $\widehat{\boldsymbol A}_\ell (\z)$  and ${\boldsymbol A}_\ell (\z)$.
Then, using Assumption \ref{assump5}, the fact that the eigenvalues of ${\boldsymbol A}_\ell (\z)$ are the powers of the eigenvalues of ${\boldsymbol P} (\z)$ (e.g., \citep[Theorem 1.13-(d)]{H2008}), and the uniform convergence in Corollary \ref{cor_lambda1}, we deduce that there exists a (small) constant $\mathfrak c>0$ such that the probability of the event $\{\min_{1\leq \ell \leq L}\min_i\inf_{\z\in\mathcal Z_\epsilon} |\lambda_{\widehat{\boldsymbol A}_\ell (\z),i}(\z)|\geq \mathfrak c\}$ tends to 1. Moreover, it also holds 
\begin{equation}\label{cst_event}
	\PP	\left(\min_{1\leq \ell \leq L}\inf_{\z\in\mathcal Z_\epsilon}d_{\rm H}(\sigma[\widehat{\boldsymbol{A}}_{\ell}(\z)],\mathbb R^{-})\geq \mathfrak c\right)\rightarrow 1.
\end{equation}
This guarantees that the real logarithm of $\widehat{\boldsymbol{A}}_{\ell}(\z)$ is well defined for all $\z\in \mathcal Z_\epsilon$ with probability tending to 1.

Next, let us recall the expression of the Frechet derivative of the logarithm function that can be deduced from the integral representation \eqref{topm}. With the notation from \citep[pp 272]{H2008}, the Frechet derivative at $\boldsymbol A$ applied to a matrix $\boldsymbol \Delta$ is 
$$
L(\boldsymbol A, \boldsymbol \Delta) = \int_0^1 [t(\boldsymbol A-\boldsymbol I ) +  \boldsymbol I]^{-1} \boldsymbol \Delta [t(\boldsymbol A-\boldsymbol I ) +  \boldsymbol I]^{-1} dt.
$$
Taking $\boldsymbol \Delta= \widehat{\boldsymbol{A}}_{\ell}(\z)-{\boldsymbol{A}}_{\ell}(\z)$, we deduce that for any matrix norm, a constant $\underline C>0$ exists such that 
\begin{equation}\label{lip_log}
\left\| \log(\widehat{\boldsymbol{A}}_{\ell}(\z))- \log({\boldsymbol{A}}_{\ell}(\z)) \right\| \leq \underline C 
\left\| \widehat{\boldsymbol{A}}_{\ell}(\z)- {\boldsymbol{A}}_{\ell}(\z) \right\| .
\end{equation}
The constant $ \underline C $ depends on $\epsilon$ defining $\mathcal Z_\epsilon$, the constant $ \mathfrak c $ in \eqref{cst_event} (and on the  matrix norm considered), but not on $\z\in\mathcal Z_\epsilon$ or $\ell\in[1,L]$. From inequality \eqref{lip_log} we deduce that $\log(\widehat{\boldsymbol{A}}_{\ell}(\z))$ inherits the rate of convergence of $\widehat{\boldsymbol{A}}_{\ell}(\z)$. Finally, the regularized version of $\log(\widehat{\boldsymbol{A}}_{\ell}(\z))$, it suffices to note that the entries of $\widehat{\boldsymbol{B}}_{\ell}(\z)$ are obtained by simple transformations of the entries of $\log(\widehat{\boldsymbol{A}}_{\ell}(\z))$. Then it can be shown that, for any matrix norm
\begin{multline}\label{lip_log2}
	\left\| \log(\widehat{\boldsymbol{B}}_{\ell}(\z))- \log({\boldsymbol{A}}_{\ell}(\z)) \right\|\leq \left\| \log(\widehat{\boldsymbol{B}}_{\ell}(\z))- \log(\widehat{\boldsymbol{A}}_{\ell}(\z)) \right\|  \\ +\left\| \log(\widehat{\boldsymbol{A}}_{\ell}(\z))- \log({\boldsymbol{A}}_{\ell}(\z)) \right\| 
	=
O_{\mathbb P}	\left(\left\| \widehat{\boldsymbol{A}}_{\ell}(\z)- {\boldsymbol{A}}_{\ell}(\z) \right\|\right) .
\end{multline}
Then the statement of Proposition \ref{final_touch} follows. 
\end{proof}
 
 \medskip

  \begin{proof}[Proof of Corollary \ref{final_touch}]
This is a direct consequence of Lipschitz property of the matrix exponential function, see \cite{H2008}, and the uniform convergence rate of $ \log(\widehat{\boldsymbol{B}}_{\ell}(\z))$ derived in Proposition \ref{final_touch}. 
 \end{proof}
 
 \medskip

\begin{lemma}\label{serie_Riemann}
	Let  $\rho \geq 0$, $\varrho >0$, $\varrho \neq 1$, 
	and $ N>1$. Then
	\begin{equation}\label{sumR1}
		\frac{( N+1)^{1-\varrho}-1}{1-\varrho} 
		% - \frac{1}{1-\varrho}	
		\leq \sum_{ m=1}^ N  m^{-\varrho} 
		\leq \frac{ N^{1-\varrho}-\varrho}{1-\varrho} 
		%-\frac{\varrho}{1-\varrho},
		%\end{equation}
		\quad \text{	and  } \quad 
		%\begin{equation}\label{sumR2}
		\frac{ N^{1+\rho}+\rho}{1+\rho} \leq \sum_{ m=1}^ N  m^{\rho } \leq \frac{( N+1)^{1+\rho}-1}{1+\rho}.
	\end{equation}
	
\end{lemma}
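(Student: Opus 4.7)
The plan is to prove each pair of bounds by elementary integral comparison with the monotone function involved, namely $f(x) = x^{\rho}$ for the second pair (nondecreasing since $\rho \geq 0$) and $f(x) = x^{-\varrho}$ for the first pair (strictly decreasing on $[1,\infty)$ since $\varrho > 0$). The two key unit-interval inequalities
\[
\int_{m-1}^{m} f(x)\,dx \;\leq\; f(m) \;\leq\; \int_{m}^{m+1} f(x)\,dx \qquad (f \text{ nondecreasing}),
\]
and their reverse for nonincreasing $f$, will be summed telescopically and combined with the explicit antiderivative $x^{1+\rho}/(1+\rho)$ or $x^{1-\varrho}/(1-\varrho)$, which is well defined because $\varrho \neq 1$.

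For the power sum $\sum_{m=1}^N m^{\rho}$, the upper bound follows directly by summing $m^{\rho} \leq \int_{m}^{m+1} x^{\rho}\,dx$ from $m=1$ to $N$, giving $\int_1^{N+1} x^{\rho}\,dx = [(N+1)^{1+\rho}-1]/(1+\rho)$. For the lower bound, the first term $1^{\rho}=1$ is isolated and the opposite inequality is summed for $m=2,\ldots,N$, yielding $1 + \int_1^{N} x^{\rho}\,dx = 1 + (N^{1+\rho}-1)/(1+\rho)$, which collapses to $(N^{1+\rho}+\rho)/(1+\rho)$ after a common-denominator simplification. Isolating $m=1$ is precisely what produces the $+\rho$ in the numerator of the stated bound.

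The first pair of inequalities is treated analogously with all monotonicities reversed. The lower bound follows at once from summing $m^{-\varrho} \geq \int_{m}^{m+1} x^{-\varrho}\,dx$ for $m=1,\ldots,N$ and evaluating $\int_1^{N+1} x^{-\varrho}\,dx = [(N+1)^{1-\varrho}-1]/(1-\varrho)$. For the upper bound, the $m=1$ term (equal to $1$) is peeled off and the inequality $m^{-\varrho} \leq \int_{m-1}^{m} x^{-\varrho}\,dx$ is summed for $m=2,\ldots,N$; the resulting $1 + (N^{1-\varrho}-1)/(1-\varrho)$ reduces, after combining over the denominator $1-\varrho$, to $(N^{1-\varrho}-\varrho)/(1-\varrho)$, independently of whether $\varrho$ lies above or below $1$.

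The only real obstacle is bookkeeping: the asymmetric forms of the bounds (the $-\varrho$ in the upper bound of the first pair and the $+\rho$ in the lower bound of the second) force one to peel off the boundary term at $m=1$ rather than absorb it into the integral over $[0,1]$ or $[0,N]$. No analytic subtlety arises: the exclusion $\varrho \neq 1$ simply sidesteps the logarithmic antiderivative, and the two regimes $\varrho \in (0,1)$ and $\varrho \in (1,\infty)$ are covered by the same algebraic manipulation since the antiderivative formula is valid on both sides of the excluded value.
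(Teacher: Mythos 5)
Your proof is correct and uses the same integral-comparison idea as the paper; the only difference is bookkeeping, since you peel off the $m=1$ term and bound $\sum_{m=2}^{N}$ by $\int_1^N$, whereas the paper bounds the shifted sum $\sum_{m=2}^{N+1}$ by $\int_1^{N+1}$ and then replaces $N+1$ by $N$ using $N>1$. Both yield exactly the stated constants, so nothing further is needed.
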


\medskip

\begin{proof}[Proof of Lemma \ref{serie_Riemann}]
	For any $ m \geq 1$ and $x\in [ m, m+1]$, we have
	$
	( m+1)^{-\varrho} \leq x^{-\varrho} \leq  m^{-\varrho}.
	$
	Then 
	$$
	\sum_{ m=2}^{ N+1}  m^{-\varrho} = \sum_{ m=1}^{ N} ( m+1)^{-\varrho} \leq \int_{1}^{ N+1} x^{-\varrho} dx = \frac{1}{1-\varrho}\left[( N+1)^{1-\varrho} - 1
	\right]\leq \sum_{ m=1}^{ N}  m^{-\varrho},
	$$
	from which the first part of \eqref{sumR1} follows. The case $\rho=0$ is obvious.
	Finally, since $\rho >0$, for any $ m \geq 1$ and $x\in [ m, m+1]$, we have
	$
	 m^{\rho} \leq x^{\rho} \leq ( m+1)^{\rho}.
	$
	Thus
	$$
	\sum_{ m=1}^{ N}  m^{\rho} \leq \int_{1}^{ N+1} x^{\rho} dx = \frac{1}{1+\rho}\left[( N+1)^{1+\rho} - 1
	\right]\leq \sum_{ m=1}^{ N} ( m+1)^{\rho} = \sum_{ m=2}^{ N+1}  m^{\rho} ,
	$$
	and  the second part of \eqref{sumR1} follows.  \end{proof}

%------------------------------------------
\bibliographystyle{abbrv}
\bibliography{references-chapitre2}

\end{document}